\newtheorem{defi}{Definition}
\newtheorem{thm}{Theorem}
\newcommand{\adj}{\,\text{\textemdash}\,}
\def\..{\,\mathpunct{\ldotp\ldotp}} 
\newcommand{\adjugate}[1]{{\operatorname{adj}}({#1})}
\newcommand{\lift}[2]{\widetilde{#1}^{#2}}
\newcounter{noqed}
\newcommand{\qed}{ \ifmmode\mbox{
}\fi\rule[-.05em]{.3em}{.7em}\setcounter{noqed}{0}}
\newenvironment{proof}[1][{}]{\noindent{\bf Proof#1.
}\setcounter{noqed}{1}}{\ifnum\value{noqed}=1\qed\fi\par\medskip}
\renewcommand{\phi}{\varphi}
\renewcommand{\epsilon}{\varepsilon}
\renewcommand{\theta}{\vartheta}
\newcommand{\pre}{\operatorname{pre}}
\newcommand{\post}{\operatorname{post}}
\title{Monotonicity in Undirected Networks}
\author{Paolo Boldi\thanks{The authors have been supported by the FASTEN EU Project, H2020-ICT-2018-2020 (GA 825328).}\qquad Flavio Furia\qquad Sebastiano Vigna\\Universit\`a degli Studi di Milano, Dipartimento di Informatica}
\begin{document}
\bibliographystyle{plain}

\maketitle

\begin{abstract}
Is it always beneficial to create a new relationship (have a new follower/friend) in a social network?
This question can be formally stated as a property of the centrality measure
that defines the importance of the actors of the network.
\emph{Score monotonicity} means that adding an arc increases the centrality score of the
target of the arc; \emph{rank monotonicity} means that adding an arc improves the
importance of the target of the arc relatively to the remaining nodes.
It is known that most centralities are both score and rank monotone on directed,
strongly connected graphs.
In this paper, we study the problem of score and rank monotonicity for classical
centrality measures in the case of \emph{undirected} networks: in this case, we
require that score, or relative importance, improve at both endpoints of the new
edge.
We show that, surprisingly, the situation in the undirected case is very
different, and in particular that closeness, harmonic centrality, betweenness,
eigenvector centrality, Seeley's index, Katz's index, and PageRank are not rank
monotone; betweenness and PageRank are not even score monotone.
In other words, while it is always a good thing to get a new follower, it is not
always beneficial to get a new friend.
\end{abstract}

\maketitle

\section{Introduction}

The study of centrality in networks goes back to the late forties. Since then,
several measures of centrality with different properties have been proposed---see~\cite{BoVAC} for a survey.
To sort out which measures are more apt for a specific application, one can try
to classify them through some axioms that they might satisfy or not. 

In~\cite{BoVAC,BLVRMCM}, two of the authors have studied in particular
\emph{score monotonicity} and \emph{rank monotonicity} on directed graphs.
The first property says that when an arc $x\to y$ is added to the graph, the
score of $y$ strictly increases~\cite{SabCIG}.
Rank monotonicity~\cite{CDKLEAA} states that after adding an
arc $x\to y$ all nodes with a score smaller than (or equal to) $y$ have still a
score smaller than (or equal to) $y$.
Score and rank monotonicity complement themselves: score monotonicity tells us
that ``something good happens''; rank monotonicity that ``nothing bad happens''.

In some way, both axioms aim at answering the following question: is it always
worth it for a node in a directed social network (say, Twitter) to have a new
incoming arc (in Twitter parlance, a new follower)? The two monotonicity axioms
introduced above have a different interpretation of what ``worth'' means.
``Score monotonicity'' interprets it simply as an increase of score: if you get
a new follower, does your score always increase? ``Rank monotonicity''
interprets it with respect to the score of other nodes: if you get a new
follower, do you still dominate (have a larger score than) the same nodes you used
to dominate before, and possibly more? As we said, for most notions of
importance (i.e., centrality measures) the answer to both questions is ``yes'',
under very mild assumptions~\cite{BLVRMCM}.
 
Once we move to undirected graphs, however, previous definitions and results are
no longer applicable. Thus, in this paper, we aim at answering a subtly different question:
is it always worth it for an actor in an \emph{undirected} social network (say,
Facebook) to have a new friend? Again, ``worth'' can be taken to refer
to the score or to the rank.
In this paper, we propose more precise definitions that are natural extensions
of score and rank monotonicity to the undirected case, and prove results about
classical centrality measures: closeness~\cite{BavMMGS}, harmonic centrality~\cite{BeaIIC}, betweenness~\cite{AntRG,FreSMCBB}, and
four variants of spectral ranking~\cite{VigSR}---eigenvector
centrality~\cite{LanZRWT,BerTGA}, Katz's index~\cite{KatNSIDSA}, Seeley's
index~\cite{SeeNRI}, and PageRank~\cite{PBMPCR}.

As we will see, while in some cases we can witness some score increase,
except for Seeley's index \emph{none of the centrality measures we consider
is rank monotone}. This is somehow surprising, and will yield some reflection.

Note that adding a single edge to an undirected graph is equivalent to adding
\emph{two} opposite arcs in a directed graph, which may suggest why the
situation is so different, at least from the mathematical viewpoint.
Understanding under which conditions a centrality measure does not satisfy 
an axiom will be a theme that we
will try to pursue in the course of the discussion.

We provide classes of counterexamples of arbitrary size; moreover, we
always provide both a counterexample in which the loss of rank happens in the
less important node of the new edge and a counterexample in which the loss of rank happens in the
more important node of the new edge. In this way, we will show that it is impossible  
for the two actors in the social network creating the new edge to predict
whether the edge will be beneficial \emph{even knowing their relative importance}.
The results obtained in this paper are resumed in Table~\ref{tab:summ}. 
 
To prove general results in the case of spectral rankings, 
we exploit the connection between spectral rankings and
graph fibrations~\cite{BoVGF,BLSGFGIP}, which makes us able to reduce
computations on graphs with a variable number of nodes to similar
computations on graphs with a fixed number of nodes.
This approach to proofs, which we believe is of independent interest, makes it
possible to use analytic techniques to control the values assumed by
eigenvector centrality, Katz's index, and PageRank.

We conclude the paper with some anecdotal evidence from a medium-sized real-world
network, showing that violations of monotonicity do happen also in practice.

With respect to the conference paper~\cite{BFVSRMUN}, all results on geometric
centralities and betweenness are new, as well as all general results on eigenvector
centrality, and all results about Katz's index. The second PageRank counterexample
is also new. All results about demotion, for all centralities, are also new. The proofs
for the first PageRank example have been significantly simplified.   

Most of the computations in this paper (in particular, the manipulation of complex rational functions)
have been performed using Sage~\cite{Sage}. All our Sage worksheets are available at
\url{https://github.com/vigna/monotonicity}, and will be badged on the Zenodo platform after the reviewing process.

\begin{table}
\centering
\begin{tabular}{l||c|c}
& score monotonicity & rank monotonicity\\\hline
Closeness & yes & no\\
Harmonic centrality & yes & no\\
Betweenness & no & no\\
Eigenvector centrality & no & no\\
Seeley's index & yes & yes\\
Katz's index & yes & no\\
PageRank & no & no\\
 \end{tabular}
 \vspace*{3mm}
\caption{\label{tab:summ}Summary of the results of this paper for the case of \emph{connected undirected graphs}. 
For comparison, recall that~\cite{BLVRMCM} all the 
centrality measures listed are both score and rank monotone on \emph{strongly connected directed graphs},
with the only exception of betweenness that is neither.}
\end{table}

\section{Graph-theoretical preliminaries}
\label{sec:defs}

While we will focus on simple undirected graphs, we are going to make use of
some proof techniques that require handling more general types of graphs.

A \emph{(directed multi)graph} $G$ is defined by a set $N_G$ of nodes, a set
$A_G$ of arcs, and by two functions $s_G,t_G:A_G\to N_G$ that specify the
source and the target of each arc; a \emph{loop} is an arc with the same source and target;
the main difference between this definition and the standard definition of a
directed graph is that we allow for the presence of multiple arcs between a pair of nodes.
When we do not need to distinguish between multiple arcs, we write $x\to y$ to denote
an arc with source $x$ and target $y$. 



Since we do not need to discriminate between graphs that only differ because of
node names, we will often assume that $N_G=\{\,0,1,\dots,n_G-1\,\}$ where $n_G$
is the number of nodes of $G$.
Every graph $G$ has an associated $n_G \times n_G$ \emph{adjacency matrix}, also
denoted by $G$, where $G_{xy}$ is the number of arcs from $x$ to $y$.

A \emph{(simple) undirected graph} is a loopless\footnote{Note that our negative results are \emph{a fortiori} true if we consider
undirected graphs with loops. Our positive results are still valid in the same case
using the standard convention that each loop increases the degree by two.} graph $G$ such that for all $x,y \in N_G$
we have $G_{xy}=G_{yx}\leq 1$. In other words, there is at most one arc between any two nodes and if there is an 
arc from $x$ to $y$ there is also an arc in the opposite direction.
In an undirected graph, an \emph{edge} between $x$ and $y$ is a pair of arcs
$x\to y$ and $y\to x$, and it is denoted by $x\adj y$. This definition is equivalent to the more common
notion that an edge is an unordered set of nodes, but it makes it
possible to mix undirected and directed graphs: indeed, even in drawings
we will freely mix arcs and edges. For undirected graphs,
we prefer to use the word ``vertex'' instead of ``node''.

\section{Score and rank monotonicity axioms on undirected graphs}

One of the most important notions that researchers have been trying to capture
in various types of graphs is ``node centrality'':
ideally, every node (often representing an
individual) has some degree of influence or importance within the social domain
under consideration, and one expects such importance to be reflected in the
structure of the social network; centrality is a quantitative measure that
aims at revealing the importance of a node.

Formally, a \emph{centrality} (measure or index) is any function $c$ that, given a graph $G$, assigns a  
real number $c_G(x)$ to every node $x$ 	of $G$; countless notions of centrality have been proposed over time, for
different purposes and with different aims; some of them were originally defined only for a specific category of graphs. Later some of
these notions of centrality have been extended to more general classes; all centrality measures discussed in this
paper can be defined properly on all undirected graphs (even disconnected ones).
We assume from the beginning that
the centrality measures under examination are invariant by isomorphism, that is, that they depend just on the
structure of the graph, and not on a particular name chosen for each node. In particular, all nodes exchanged
by an autorphism necessarily share the same centrality score, and we will use this fact to simplify our computations.

Axioms are useful to isolate properties of different centrality measures and make it possible to compare them. One
of the oldest papers to propose this approach is~\cite{SabCIG}, which introduced score monotonicity,
and many other proposals have appeared in the
last few decades.

In this paper we will be dealing with two properties of centrality measures:

\begin{defi}[Score monotonicity]
Given an undirected graph $G$, 
a centrality $c$ is said to be \emph{score monotone on $G$} iff for every pair of non-adjacent vertices $x$ and $y$ we have that
\[
	c_G(x) < c_{G'}(x) \quad\text{and}\quad  c_G(y) < c_{G'}(y),
\]
where $G'$ is the graph obtained adding the new edge $x \adj y$ to $G$.
We say that $c$ is \emph{score monotone on undirected graphs} iff it is score monotone on all 
undirected graphs.
\end{defi}

\begin{defi}[Rank monotonicity]
Given an undirected graph $G$, 
a centrality $c$ is said to be \emph{rank monotone\footnote{We remark that in~\cite{BFVSRMUN}
rank monotonicity was defined incorrectly, using an apparently (but not effectively)
equivalent condition to stated in~\cite{BLVRMCM} and~\cite{CDKLEAA}.} on $G$} iff for every pair of non-adjacent vertices $x$ and $y$ we have that for all vertices $z\neq x,y$
\[
	 c_G(z) < c_{G}(x) \Rightarrow  c_{G'}(z) < c_{G'}(x)  \quad\text{and}\quad
	 c_G(z) < c_{G}(y)  \Rightarrow c_{G'}(z) < c_{G'}(y), 
\]
and moreover
\[
	 c_G(z) \leq c_{G}(x) \Rightarrow  c_{G'}(z)\leq c_{G'}(x)  \quad\text{and}\quad
	 c_G(z)\leq c_{G}(y)  \Rightarrow c_{G'}(z) \leq c_{G'}(y),
\]
where $G'$ is the graph obtained adding the new edge $x \adj y$ to $G$.
It is said to be \emph{strictly rank monotone on $G$} if instead 
\[
	 c_G(z) \leq c_{G}(x) \Rightarrow  c_{G'}(z) < c_{G'}(x) \quad\text{and}\quad
	 c_G(z) \leq c_{G}(y)  \Rightarrow c_{G'}(z) < c_{G'}(y) 
\]
We say that $c$ is \emph{(strictly) rank monotone on undirected graphs} iff it is (strictly) rank monotone on all 
undirected graphs.
\end{defi}

Score monotonicity tells us that in absolute terms the new edge is beneficial to $x$ and $y$. Rank monotonicity tells us
that in relative terms the new edge is not hurting them, in the sense that nodes that were (strictly) dominated by $x$ or $y$ are still (strictly) dominated. Finally,
strict rank monotonicity is a stronger property that implies, besides preservation of dominance, an improvement, as additionally all
nodes in a score tie with $x$ or $y$ will have a strictly smaller score after adding the new edge. As a sanity
check, we note that degree, the simplest centrality measure, is both score monotone and strictly rank monotone.

These three properties can be studied on the class of all undirected graphs or only on the class of connected graphs, giving rise to six possible ``degrees
of monotonicity'' that every given centrality may satisfy or not. This paper studies these different degrees of monotonicity for some of the most popular 
centrality measures, also comparing the result obtained with the corresponding properties in the directed case.

With respect to the directed case, there is an important difference: violation of the axioms may happen on one of
the nodes involved, or on both. While we never witnessed the latter situation, there is in the first case 
a distinction that we feel important enough to deserve a name: 
\begin{defi}
A violation of score monotonicity is a \emph{top violation} if the endpoint of the new edge whose scores decreases is more
important than the other. It is a \emph{bottom violation} otherwise. The same distinction applies to violations of rank monotonicity. 
\end{defi}
Top violations are somewhat sociologically natural: if a network superstar becomes friend with a nobody, it is not surprising that
the nobody increases their popularity, whereas the superstar loses a bit of charm. Bottom violations, however, are
much less natural: in the same context, the nobody sees their importance decrease, 
nurturing in a bizarre inversion of flow the superstar popularity.

As we already anticipated, and differently from the directed case, all
centrality measures we consider, except for Seeley's index (which however is trivial in this context---see Section~\ref{sec:seeley}) will turn out to be not rank monotone. 
Moreover, most centralities are not score monotone.
As a consequence, this paper
is a sequence of counterexamples (to score monotonicity and to rank monotonicity, hence \emph{a fortiori} to its strict version):
all counterexamples exhibit an undirected graph $G$ and two non-adjacent vertices $x$ and $y$ such that when
you add the edge $x \adj y$ to $G$, $x$ decreases its score, or its rank with respect to some other vertex $z$. We may call $x$ the ``losing endpoint'' (i.e., the one that is hurt by the addition of the edge).

Not all counterexamples are equally good, though. We will make an effort to have the theoretically 
strongest counterexamples we can find, and we will also look for properties that have a practical interpretation. More in
detail:
\begin{itemize}
  \item all our counterexamples are connected;
  \item all our counterexamples are parametric graphs that can be instantiated in graphs of arbitrarily large size;
  \item we always give both top and bottom violation counterexamples; thus, even knowing whether you are more or less 
  important than your new neighbor will not help in knowing if you will gain or lose from the new edge;
  \item in all our counterexamples the losing endpoint of the new edge is also \emph{demoted}, that is, 
  the number of nodes with a larger score than the losing endpoint increases after adding the new edge. 
\end{itemize}

The last point is particularly important because demotion \emph{is not implied
by the lack of rank monotonicity}:
it may be the case that $x$ used to be more important than $z$ and it becomes
less important than $z$ after the addition of the edge $x\adj y$, but still the
number of nodes that are more important than $x$ becomes smaller with the
addition of $x \adj y$. The lack of demotion might suggest a weaker notion of rank monotonicity,
in which the number of nodes whose score dominates $x$ (or $y$) decreases (such a notion is strictly
weaker as it is implied by rank monotonicity). However, this weaker notion
is not very appealing from a practical viewpoint, as it is not \emph{locally testable}---it
has no immediate consequence for the relative importance of an endpoint of the edge and another vertex.
Proving demotion implies that the counterexamples in this paper are strong enough to violate
also the weaker notion of monotonicity described above.

\section{Geometric centralities}

Since adding a new edge can only shorten existing shortest paths or create new
ones, it is immediate to show that harmonic centrality is score monotone; for the same reason,
closeness centrality is score monotone on connected graphs, whereas
counterexamples similar to those of the directed case of~\cite{BoVAC} prove 
that closeness is not score monotone in the general case.

Less intuitively, neither closeness nor harmonic centrality are rank monotone in the undirected case.
The family of counterexamples we found shows that adding an edge can shorten distances in ways
that are much more useful for some vertices not incident on the new edge than on its endpoints.  

Our counterexample for rank monotonicity of closeness and harmonic centrality
is shown in Figure~\ref{fig:closeness}.
The idea behind the graph is that the edge $0\adj1$ reduces the distance between
vertex $0$ and the vertices labeled with $4$, but does not reduce the distance between vertex $0$ and vertex $3$ (and more
importantly between vertex $0$ and the star around vertex $3$). Thus the vertices labeled with $4$ will
gain more centrality from the new edge than vertex $0$, and for appropriate values of $j$ and
$k$ we will be able to prove a violation of rank monotonicity (all vertices labeled with  $4$ share 
the same centrality). The stars of size $r$ around vertex $1$
and vertex $2$ will instead be useful by giving us some more space to play with the 
relative importance of the endpoints of the new edge, tuning the graph in Figure~\ref{fig:closeness} to be an example
of top or bottom violation.

\subsection{Closeness}

We recall that closeness of a vertex $x$ is defined as the reciprocal of its \emph{peripherality}
\[
p(x)=\sum_{y\in N_G}d(x,y),
\]
where $d(x,y)$ is the \emph{distance} (i.e., the length of a shortest path) between $x$ and $y$.

\begin{figure}
\centering
\includegraphics{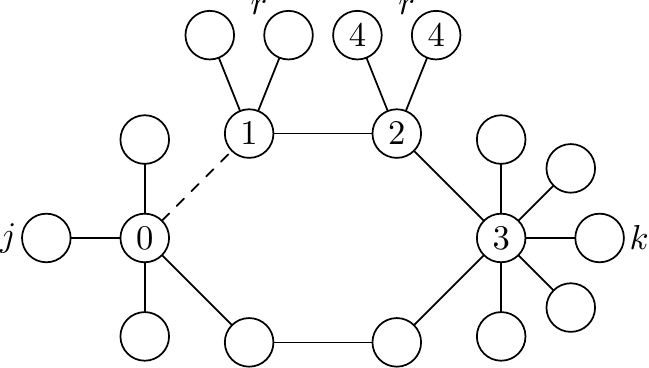}
\caption{\label{fig:closeness}A counterexample to rank monotonicity for closeness and harmonic centrality.
There is a star with $j$ leaves
around vertex $0$, a star with $k$ leaves around vertex $3$, 
a star with $r$ leaves around vertex $1$ and a star with $r$ leaves around vertex $2$.
Before adding the edge $0\adj1$, the score of vertex $0$ is larger than the score of the vertices
labeled with $4$; after, it is smaller.}
\end{figure}

We denote
for simplicity with $\pre(-)$ and $\post(-)$ the peripherality of the graph in Figure~\ref{fig:closeness} before and after adding the edge $0\adj 1$. Then,
\begin{align*}
\pre(0)&=15 + j + 4 k + 11r & \post(0)&=9 + j + 4  k + 5r \\
 \pre(1)&=15 +6j + 3  k + 3r  & \post(1)&=9 + 2j + 3  k + 3r \\
 \pre(4)&=15 + 6j + 3 k + 5r & \post(4)&=13 + 4j + 3 k + 5r. \\
\end{align*}
We are interested in finding solutions, if they exists, to the set of inequalities
\[
\pre(0) > \pre(1), \pre(0)<\pre(4),  \post(0)> \post(4),
\]
which specify that vertex $0$ violates rank monotonicity with respect to vertices labeled with $4$, and that it is less important than vertex $1$ (recall we are
manipulating the reciprocal of closeness), and
\[
\pre(0) < \pre(1), \pre(0)<\pre(4),  \post(0)> \post(4),
\]
that correspond to the analogous case in which vertex $0$ is more important than vertex $1$. There
are infinite solutions for both sets of inequalities, and in particular 
$j=5r$, $k=18r$ ($r\geq 2$), and $j=4r+4$, $k=12r+17$ ($r\geq 1$) satisfy the
first and second set, respectively.

\begin{thm}
Closeness is not rank monotone on the graphs of Figure~\ref{fig:closeness} for $r\geq 2$,
$j=5r$, and $k=18r$ (bottom violation) and for $r\geq 1$,
$j=4r+4$, and $k=12r+17$  (top violation).
\end{thm}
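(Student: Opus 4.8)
The plan is to reduce the theorem to two routine arithmetic checks, once the six peripherality formulas displayed just above the statement have been justified; so the real work is the distance bookkeeping behind those formulas.

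First I would use isomorphism invariance to collapse the vertex set of the graph of Figure~\ref{fig:closeness} into a bounded number of \emph{types}: the $j$ leaves of the star around $0$, the $k$ leaves of the star around $3$, the $r$ leaves around $1$, the $r$ leaves around $2$, the hubs $0,1,2,3$, and the vertices labelled $4$ (all mutually exchanged by an automorphism, hence sharing a score). For the statement only $p(0)$, $p(1)$ and $p(4)$ are needed, so for each of these three vertices I would tabulate its distance to every type, multiply by the cardinality of the type, and sum, obtaining $p(\cdot)=\alpha j+\beta k+\gamma r+\delta$ with small integer coefficients. I would then redo the computation in $G'$: adding $0\adj1$ can only shorten paths, and the only shortenings are those realised by a shortest path through the new edge. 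The structural point already highlighted in the discussion is that these shortenings occur from $0$ and the star around $0$ towards the star around $1$ and towards the vertices labelled $4$, but \emph{not} towards vertex $3$ or its $k$-leaf star; this is exactly why $\post(4)$ drops more than $\post(0)$, and it is the conceptual heart of the counterexample.

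With the six linear forms in hand, the rest is finite. Substituting $j=5r$, $k=18r$ gives $\pre(0)=15+88r$, $\pre(1)=15+87r$, $\pre(4)=15+89r$, $\post(0)=9+82r$ and $\post(4)=13+79r$; hence $\pre(0)>\pre(1)$ and $\pre(0)<\pre(4)$ hold for all $r\geq1$, while $\post(0)>\post(4)$ is equivalent to $3r>4$, i.e.\ $r\geq2$. Since closeness is $1/p$, this says that in $G$ vertex $0$ is dominated by vertex $1$ but dominates the vertices labelled $4$, whereas in $G'$ it is dominated by them: a \emph{bottom} violation of rank monotonicity. Substituting instead $j=4r+4$, $k=12r+17$ gives $\pre(0)=87+63r$, $\pre(1)=90+63r$, $\pre(4)=90+65r$, $\post(0)=81+57r$ and $\post(4)=80+57r$, so now $\pre(0)<\pre(1)$ together with $\pre(0)<\pre(4)$ and $\post(0)>\post(4)$ all hold for every $r\geq1$, giving a \emph{top} violation. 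In both cases the argument applies verbatim to each vertex labelled $4$, so the whole group of $4$'s passes from strictly below $0$ to strictly above $0$; extending the same distance bookkeeping to the remaining vertex types then yields the claimed demotion of vertex $0$, since the number of vertices dominating it strictly increases.

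The main obstacle is precisely the first step: correctly enumerating all pairwise distances in a graph of unbounded size depending on three independent parameters, and pinning down exactly which of those distances the single new edge shortens, so that the constants and the coefficients of $j$, $k$ and $r$ in all six formulas come out right. Once these are fixed, everything reduces to substitution and comparison.
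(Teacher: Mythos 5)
Your proposal takes essentially the same route as the paper: the paper likewise just displays the six linear peripherality forms (without spelling out the distance bookkeeping) and reduces the claim to checking the three inequalities $\pre(0)\gtrless\pre(1)$, $\pre(0)<\pre(4)$, $\post(0)>\post(4)$ for the two parameter choices, and your explicit substitutions ($15+88r$ vs.\ $15+89r$, $9+82r$ vs.\ $13+79r$, etc.) are all arithmetically correct and yield exactly the stated ranges $r\geq2$ and $r\geq1$. The identification of which case is a top and which a bottom violation (via the reciprocal relation between peripherality and closeness) also matches the paper, so the proposal is correct and essentially identical in approach.
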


While the family of graphs we consider contains graphs of unbounded size, each graph has just
ten distinct peripherality scores. We can thus
compare exactly the peripherality of all vertices with that of vertex $0$ before and after
adding the new edge. It is easy to see that for the parameter
sets of the previous theorem all vertices, except the $j$ vertices labeled with $4$ and sometimes vertex $1$, maintain the same relative position
to vertex $0$ after adding the edge $0\adj 1$. Thus, in both cases vertex $0$ is demoted by at least $j-1$ 
positions.   

\subsection{Harmonic centrality}

The counterexample in Figure~\ref{fig:closeness} works also for harmonic centrality, which is
not surprising as the only difference between closeness and harmonic centrality is the usage
of a harmonic mean instead of an arithmetic mean.

Denoting this time with $\pre(-)$ and $\post(-)$ the harmonic centrality of the graph in Figure~\ref{fig:closeness} before and after adding the edge $0\adj 1$,
we have
\begin{align*}
\pre(0)&=\frac{137}{60} + j + \frac14k + \frac{11}{30}r  & \post(0)&=\frac{10}{3} + j + \frac14k + \frac56r \\
\pre(1)&=\frac{137}{60} + \frac16j + \frac13k +\frac32r  & \post(1)&=\frac{10}{3} + \frac12j + \frac13k + \frac32r\\ 
    \pre(4)&= \frac{137}{60}+ \frac16j + \frac13k + \frac56r 
  & \post(4)&=\frac{29}{12} + \frac14j + \frac13k + \frac56r . \\
\end{align*}

This time we are interested in finding solutions, if they exists, to the set of inequalities
\[
\pre(0) < \pre(1), \pre(0)>\pre(4),  \post(0)< \post(4)
\]
and
\[
\pre(0) >\pre(1), \pre(0)>\pre(4),  \post(0)< \post(4).
\]
There
are again infinite solutions for both sets of inequalities, and in particular 
$j=26r$, $k=247r$ ($r\geq 1$) and $j=26r$, $k=246r$ ($r\geq 1$), satisfy the
first and second set, respectively.

\begin{thm}
Harmonic centrality is not rank monotone on the graphs of Figure~\ref{fig:closeness} for $r\geq 1$,
$j=26r$, and $k=247r$ (bottom violation) and for $r\geq 1$,
$j=26r$, and $k=246r$  (top violation).
\end{thm}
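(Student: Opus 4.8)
The plan is to verify, by direct substitution into the six closed-form expressions for $\pre(\cdot)$ and $\post(\cdot)$ supplied just above the statement, that the two claimed parameter families satisfy the relevant systems of inequalities, and then to argue demotion from the fact that the family has only finitely many distinct score values. First I would treat the bottom violation case $j=26r$, $k=247r$ with $r\geq 1$: substituting these into the harmonic centrality formulas turns each of $\pre(0),\pre(1),\pre(4),\post(0),\post(4)$ into an affine function of $r$ alone (a rational constant plus a rational multiple of $r$), and the three inequalities $\pre(0)<\pre(1)$, $\pre(0)>\pre(4)$, $\post(0)<\post(4)$ become three linear inequalities in $r$; I would check each reduces to something of the form $a r + b > 0$ (or $<0$) that holds for all integers $r\geq 1$. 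Concretely, for the first two one compares the coefficients of $j$, $k$, $r$: $\pre(0)-\pre(4)=\bigl(j-\tfrac16 j\bigr)+\bigl(\tfrac14 k-\tfrac13 k\bigr)+\bigl(\tfrac{11}{30}-\tfrac56\bigr)r=\tfrac56 j-\tfrac1{12}k-\tfrac7{15}r$, into which $j=26r$, $k=247r$ is plugged to get a single multiple of $r$ whose sign must come out positive; similarly for $\pre(1)-\pre(0)$ and for $\post(4)-\post(0)=\tfrac{29}{12}-\tfrac{10}{3}+\bigl(\tfrac14 j-\tfrac12 j\bigr)=-\tfrac{11}{12}-\tfrac14 j$, wait — here one must be careful, since $\post(4)-\post(0)$ as written is $-\tfrac{11}{12}-\tfrac14 j+0\cdot k+0\cdot r$, which is negative, so in fact it is $\post(0)-\post(4)$ that is positive; this sign bookkeeping is exactly the place where a slip is easy, so I would recompute each difference twice. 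Then I would repeat the identical substitution for the top violation family $j=26r$, $k=246r$, $r\geq 1$, where the only change from the bottom case is that the first inequality flips to $\pre(0)>\pre(1)$; since $\pre(1)-\pre(0)$ depends on $k$ only through the coefficient $\tfrac13-\tfrac14=\tfrac1{12}$, decreasing $k$ from $247r$ to $246r$ shifts this difference by $-\tfrac1{12}r$, which should be precisely enough to reverse its sign for all $r\geq 1$ while leaving the other two inequalities (which involve $k$ with coefficients $-\tfrac1{12}$ and $0$) still valid — I would confirm the second inequality does not simultaneously break.

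For the demotion claim I would argue as in the closeness case. The graph of Figure~\ref{fig:closeness} has a bounded number of automorphism-orbits of vertices (vertex $0$ and its $j$ leaves, vertices $1,2,3$ and their stars, and the vertices labeled $4$), hence only a fixed finite list — at most ten — of distinct harmonic centrality values, independent of $j,k,r$; I would tabulate the $\pre$ and $\post$ value of a representative of each orbit as an affine function of $j,k,r$, substitute the chosen parameters, and check that every orbit other than the $j$ leaves labeled $4$ (and possibly vertex $1$, whose behaviour relative to $0$ is precisely what the first inequality controls) keeps the same order relation with vertex $0$ across the edge addition. Since the $j$ vertices labeled $4$ go from being dominated by (strictly less central than) vertex $0$ to dominating it, and all share one centrality value, vertex $0$ is demoted by at least $j-1$ positions (by $j$ if vertex $1$ also overtakes it), establishing the strong form of the counterexample and in particular the failure of the weaker ``number of dominators decreases'' notion discussed earlier.

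The main obstacle is not conceptual but arithmetic hygiene: the harmonic formulas carry denominators $60$, $12$, $30$, $3$, $6$, $4$, and the whole argument rests on a handful of linear inequalities in $r$ whose correctness hinges on getting every rational coefficient and every inequality direction right, especially the distinction between top and bottom violation which is decided by a single unit change in the coefficient of $k$. I would therefore clear denominators early — multiply all three harmonic-centrality inequalities by $60$ so that every comparison is between integers — and, rather than presenting the grind, simply state the three resulting integer inequalities for each parameter family and observe that each holds for all integers $r\geq 1$ (indeed each is of the form ``positive integer times $r$, plus a constant, has the asserted sign''), deferring the elementary verification to the reader or to the accompanying Sage worksheet. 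The only other point requiring a word of care is that the construction presupposes the stated $\pre/\post$ formulas, which were derived by inspection of Figure~\ref{fig:closeness}; since those are given in the excerpt I would take them as established and not re-derive the distances.
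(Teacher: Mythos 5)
Your overall strategy---substitute the two parameter families into the closed-form expressions for $\pre(\cdot)$ and $\post(\cdot)$, reduce each of the three inequalities to a linear inequality in $r$, and get demotion from the fact that there are only finitely many distinct score values---is exactly what the paper does, so the method is fine. But the one difference you actually compute in full, and the one on which the whole counterexample hinges, is computed wrongly. You write $\post(4)-\post(0)=\frac{29}{12}-\frac{10}{3}+\bigl(\frac14 j-\frac12 j\bigr)=-\frac{11}{12}-\frac14 j+0\cdot k+0\cdot r$ and conclude that $\post(0)>\post(4)$. Here you have taken the $j$-coefficient of $\post(0)$ to be $\frac12$ (that is the coefficient in $\post(1)$; in $\post(0)$ it is $1$) and you have dropped the $k$-term ($\post(4)$ has $\frac13 k$ while $\post(0)$ has $\frac14 k$, so the difference contributes $+\frac1{12}k$, not $0$). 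The correct difference is $\post(4)-\post(0)=-\frac{11}{12}-\frac34 j+\frac1{12}k$, which with $j=26r$, $k=247r$ equals $\frac{13}{12}r-\frac{11}{12}>0$ for $r\geq1$ (and with $k=246r$ equals $r-\frac{11}{12}>0$). So the required inequality $\post(0)<\post(4)$ does hold---but your computation asserts the opposite, and if it were taken at face value the counterexample would collapse, since $\post(0)<\post(4)$ is precisely the statement that rank monotonicity is violated. Flagging the spot as error-prone and promising to ``recompute twice'' does not repair a proof that, as written, establishes the negation of the inequality you need.

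The remaining pieces are sound: $\pre(0)-\pre(4)=\frac56 j-\frac1{12}k-\frac7{15}r$ is correct and evaluates to $\frac{37}{60}r>0$ (resp.\ $\frac7{10}r>0$) for the two families; your observation that the switch from $k=247r$ to $k=246r$ shifts $\pre(1)-\pre(0)$ by exactly $-\frac1{12}r$, flipping its sign from $\frac1{20}r$ to $-\frac1{30}r$ while leaving the other two inequalities intact, is also correct and is a nice way to see why the two families differ by a single unit in the $k$-coefficient. The demotion argument via the bounded number of vertex orbits matches the paper's. You only need to redo the $\post(4)-\post(0)$ computation with the right row of the table.
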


Also in this case, for
the same parameter sets, all vertices, except the $j$ vertices labeled with $4$ and sometimes vertex $1$, maintain the same relative position
to vertex $0$ after adding the edge $0\adj 1$. Thus, vertex $0$ is demoted by at least $j-1$ 
positions.

\section{Betweenness}


\begin{figure}
\centering
\includegraphics{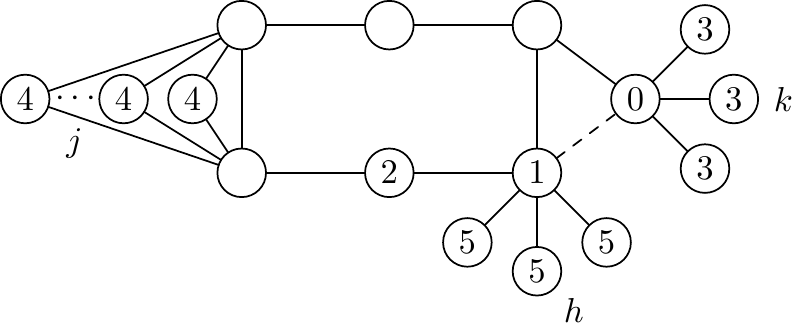}
\caption{\label{fig:betweenness}A counterexample to score and rank monotonicity for betweenness.
There is a star with $k$ leaves around vertex $0$, a star with $h$ leaves around vertex $1$, and $j$ vertices
labeled with $4$ with the same neighborhood.
Before adding the edge $0\adj 1$, the score of vertex $0$ is larger than the score of vertex $2$; after the addition, it becomes smaller.
Moreover, the score of vertex $0$ does not change when the edge is added.}
\end{figure}

Betweenness is neither score nor rank monotone on directed graphs~\cite{BLVRMCM}; the same is true in the undirected case, as
shown in the graph of Figure~\ref{fig:betweenness}.
Intuitively, the new edge puts $2$ on many shortest paths (e.g., those between any
replica of $3$ and any replica of $4$) that before needed to pass on the upper route
of the rectangle. Vertex $0$, instead, does not gain as much by the addition of the edge.

Denoting
with $\pre(-)$ and $\post(-)$ the value of betweenness before and after adding the edge $0\adj 1$, we have
\begin{align*}
\pre(0)&=\frac{k(2h+2j+k+11)}2 & 
\post(0)&=\frac{k(2h+2j+k+11)}2\\
\pre(1)&=\frac{h^2+(2j+2k+11)h+3k+7}2 & 
\post(1)&=\frac{h^2+(2j+2k+11)h+(k+1)(j+4)+4}2\\ 
\pre(2)&=\frac{(2h+2)j+3h+k+5}2 & 
\post(2)&=\frac{(2h+k+2)j+3h+2k+6}2 . \\
\end{align*}

Observe that $\pre(0)=\post(0)$, showing that score monotonicity is violated.
To prove that also rank monotonicity does not hold, we are interested in finding solutions to the set of inequalities
\[
\pre(0) < \pre(1), \pre(0)>\pre(2),  \post(0)< \post(2)
\]
and
\[
\pre(0) >\pre(1), \pre(0)>\pre(2),  \post(0)< \post(2).
\]
There
are infinite solutions for both sets of inequalities, and in particular 
$h=k$, $j=\bigl\lfloor(k^2-4k-15)/2\bigr\rfloor$, $k\geq 13$ and $k=2+h$, $j=4h$, $h\geq 12$ satisfy the
first and second set, respectively.

\begin{thm}
Betweenness is not rank monotone on the graph of Figure~\ref{fig:betweenness},
for $k=2+h$, $j=4h$, $h\geq 12$, (top violation) and 
for $h=k$, $j=\bigl\lfloor(k^2-4k-15)/2\bigr\rfloor$, $k\geq 13$ (bottom violation).\end{thm}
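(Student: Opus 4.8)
The plan is to verify the claimed counterexample directly, exploiting the fact that although the graph family in Figure~\ref{fig:betweenness} has unbounded size, betweenness takes only finitely many distinct values among its vertices (one per ``type'' of vertex), so that a finite case analysis suffices. First I would confirm the betweenness formulas $\pre(-)$ and $\post(-)$ stated just above the theorem: for each pair of vertices $u,v$ that is not an endpoint of the new edge, count the shortest paths in $G$ and in $G'$, determine how many of them pass through each of vertices $0$, $1$, and $2$, and sum the resulting fractions $\sigma_{uv}(w)/\sigma_{uv}$ over all pairs. The structure is simple enough — a short ``rectangle'' backbone with three attached stars and $j$ parallel replicas of vertex $4$ — that the shortest-path structure can be described combinatorially: distances are bounded by a small constant, and the only parameters entering the counts are $h$, $j$, $k$, and the fixed backbone. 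The key observation to track is \emph{why} $\pre(0)=\post(0)$: adding $0\adj1$ does not place vertex $0$ on any new shortest path nor remove it from any (vertex $0$ is an endpoint, and endpoints never gain betweenness from their own new edge unless they become intermediate on some other pair's path, which here they do not), whereas it does create a shorter route through vertex $2$ for many pairs, explaining the strict increase in $\post(2)$ relative to $\pre(2)$.

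Once the formulas are in hand, the remaining work is purely algebraic: substitute the two parameter assignments and check the three inequalities in each case. For the top violation, set $k=2+h$, $j=4h$ with $h\ge12$; substituting into the formulas yields three polynomial inequalities in the single variable $h$, and I would verify each reduces to a statement of the form ``a polynomial with positive leading coefficient is positive for $h\ge12$'' (checking the boundary value $h=12$ and monotonicity, or factoring). Crucially I must also check that under this assignment vertex $0$ is \emph{more} important than vertex $1$ before the edge is added (so that the losing endpoint is the more important one, making it a top violation), i.e.\ $\pre(0)>\pre(1)$. For the bottom violation, set $h=k$ and $j=\lfloor(k^2-4k-15)/2\rfloor$ with $k\ge13$; here the floor function requires slightly more care — I would split into the cases $k$ even and $k$ odd, writing $j=(k^2-4k-15)/2$ or $j=(k^2-4k-16)/2$ accordingly, and verify the inequalities (in particular $\pre(0)<\pre(1)$, which makes vertex $0$ the less important endpoint and hence gives a bottom violation) in each parity class. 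The choice $j\approx(k^2-4k-15)/2$ is evidently calibrated so that $\pre(0)$ and $\pre(1)$ are as close as possible while still ordered correctly, which is why the quadratic-in-$k$ expression appears.

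The main obstacle I expect is not the final inequality-checking — that is mechanical, and the paper notes these were done with Sage — but rather establishing the betweenness formulas rigorously, since betweenness is notoriously error-prone: one must correctly enumerate \emph{all} shortest paths between every pair of vertex types (including pairs involving distinct replicas of the same star, or two distinct copies of vertex $4$), correctly compute the multiplicities $\sigma_{uv}$ and $\sigma_{uv}(w)$, and handle ties in shortest-path length that the new edge may create or break. A convenient way to organize this is to group the $j$ copies of vertex $4$, the $k$ leaves of vertex $0$, and the $h$ leaves of vertex $1$ into equivalence classes under the graph's automorphism group (which the paper allows us to invoke), reducing the pair-enumeration to a bounded number of representative pairs weighted by class sizes; this is exactly the bookkeeping that produces the polynomial-in-$h,j,k$ shape of the stated formulas. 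After that, a final routine step checks the demotion claim made for earlier theorems is analogous here — that all vertex types other than the $j$ copies of vertex $4$ (and possibly vertex $1$) keep their order relative to vertex $0$ — so that the violation is genuinely a demotion and not merely a single inversion, though the theorem as stated only asserts non-rank-monotonicity.
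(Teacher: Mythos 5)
Your proposal follows the paper's own argument essentially verbatim: derive closed-form betweenness expressions for the three relevant vertex types by enumerating shortest paths over automorphism classes, observe $\pre(0)=\post(0)$, and then verify the inequality systems $\pre(0)\gtrless\pre(1)$, $\pre(0)>\pre(2)$, $\post(0)<\post(2)$ under the two parameter assignments (with the parity split for the floor being a routine detail the paper delegates to Sage). The approach is correct and not meaningfully different from the paper's.
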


Also in this case we have just nine different betweenness scores, which makes it possible
to show that in both cases vertex $0$ is demoted by at least one position.
 
\section{Eigenvector centrality}

Eigenvector centrality is probably the oldest attempt at deriving a centrality
from matrix information: a first version was proposed by~\cite{LanZRWT} for
matrices representing the results of chess tournaments, and it was defined in
full generality by~\cite{BerTGA}; it was rediscovered many times since then.
One considers the adjacency matrix of the graph and computes its left or right
dominant eigenvector (in our case, the two eigenvectors coincide): the result is thus defined
modulo a scaling factor, and if the graph is (strongly) connected, the result is
unique (again, modulo the scaling factor) by the Perron--Frobenius theorem~\cite{BePNMMS}.

It is not difficult to find anecdotal examples of violation of rank (and even score, fixing a normalization) 
monotonicity in simple examples. 
\begin{figure}
\centering
\includegraphics{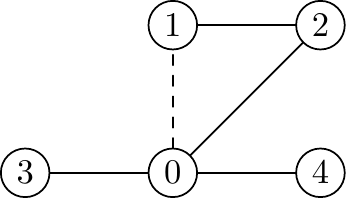}
\caption{\label{fig:ec}A counterexample to score monotonicity for eigenvector centrality. After adding the edge $0\adj1$, the score of vertex $0$ decreases:
in norm $\ell_1$, from $0.30656$ to $0.29914$; in norm $\ell_2$, from $0.65328$ to $0.63586$; and when projecting the
constant vector $\mathbf1$ onto the dominant eigenspace, from $1.39213$ to $1.35159$.}
\end{figure}
In Figure~\ref{fig:ec} we show a very simple graph that does not satisfy score monotonicity under the most obvious forms of normalization.
In particular, the score of vertex $0$ decreases after adding the edge $0\adj1$ both
in norm $\ell_1$ and norm $\ell_2$, and even when projecting the constant vector $\mathbf1$ onto the dominant eigenspace, which is an alternative way of circumventing the
scaling factor~\cite{VigSR}. The intuition 
is that initially vertex $0$ has a high score because of its largest degree (three). However, once we close the triangle
we create a cycle that absorbs a large amount of rank, effectively decreasing the score of vertex $0$. 
%
%

\begin{figure}
\centering
\includegraphics{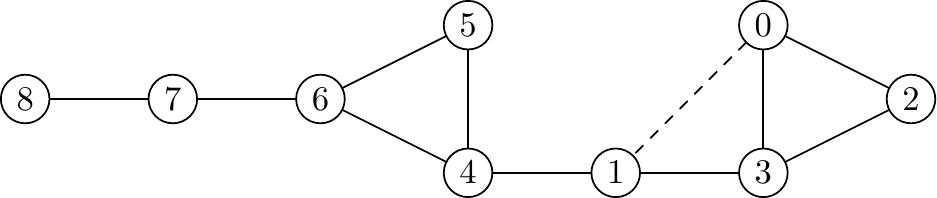}
\caption{\label{fig:ec2}A counterexample to rank monotonicity for eigenvector centrality.
Before adding the edge $0\adj 1$, the score of vertex $1$ is larger than the score of vertex $3$; after, it is smaller.}
\end{figure}

A similar counterexample, shown in Figure~\ref{fig:ec2}, proves that eigenvector centrality does not satisfy rank monotonicity.
Before adding the edge $0\adj 1$, the score of vertex $1$ used to be larger than the score of vertex  $3$; the converse is true after the addition
of the edge. This counterexample,
however, is not very satisfactory as vertex $1$ is not demoted---in fact, the opposite happens; on the other hand,
the set of vertices that dominate it changes completely with the addition of the new edge, showing that eigenvector
centrality can undergo turbulent modifications upon a simple perturbation.

We are now going to prove that eigenvector centrality does not satisfy
rank monotonicity on a class of graphs of arbitrarily large size in which we will also experience demotion.
Proving analytical results will require combining a few techniques from spectral
graph theory and analysis, as we would otherwise not be able to perform exact
computations, as in the previous cases.

\section{Interlude: graph fibrations}
\label{sec:fib}

Proving analytical results about graphs of arbitrary size requires in principle manipulating matrices
of arbitrary size, and obtaining closed-form expressions for eigenvalues and eigenvectors of such
matrices would be difficult, if not impossible. We thus turn to ideas going back to the results
obtained in the '60s in the context of the theory of \emph{graph divisors}~\cite{SacUTFCPG}, restating
them in the more recent language of \emph{graph fibrations}~\cite{BoVGF}.

A \emph{(graph) morphism} $\phi:G\to H$ is given by a pair of functions
$f_N:N_G\to N_H$ and $f_A:A_G\to A_H$ commuting with the source and
target maps, that is, $s_H(f_A(a))=f_N(s_G(a))$ and $t_H(f_A(a))=f_N(t_G(a))$ for all
$a \in A_G$. In other
words, a morphism maps nodes to nodes and arcs to arcs in such a way to
preserve the incidence relation.  
The definition of morphism we give is the obvious extension to the case of multigraphs of the standard notion the
reader may have met elsewhere.

\begin{defi}
\label{def:fibration}
A \emph{fibration}~\cite{BoVGF,GroTDTEGAI} between the graphs $G$ and $B$ is a morphism $\phi: G\to B$ such
that for each arc $a\in A_B$ and each node $x\in N_G$ satisfying
$\phi_N(x)=t_B(a)$ there is a unique arc $\lift ax\in A_G$ (called the \emph{lifting of
$a$ at $x$}) such that $\phi_A(\lift ax)=a$ and $t_G(\lift ax)=x$.
\end{defi}

If $\phi:G\to B$ is a fibration, $G$
is called the \emph{total graph} and $B$ the \emph{base} of $\phi$. 
We shall also say that $G$ is \emph{fibered (over $B$)}. The \emph{fiber over a
node $x\in N_B$} is the set of nodes of $G$ that are mapped to $x$.

A verbal restatement of the definition of fibration
is that each arc of the base lifts uniquely to each node in the fiber of its target;
moreover, we remark that Definition~\ref{def:fibration} is just an elementary restatement 
of Grothendieck's notion of fibration between categories applied to the
free categories generated by $G$ and $B$.

In Figure~\ref{fig:exfib}, we show two graph morphisms; the morphisms are
implicitly described by the colors on the nodes and in the only possible way on the arcs. The morphism displayed on the
left is not a fibration, because the loop
on the base has no counterimage ending at the lower gray node, and
moreover the other arc has two counterimages with the same target. The
morphism displayed on the
right, on the contrary, is a fibration. Observe that loops are not necessarily
lifted to loops.

\begin{figure}[htbp]
  \begin{center}
	\includegraphics{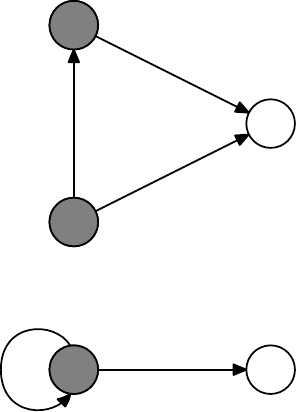}\qquad\qquad\qquad\qquad\includegraphics{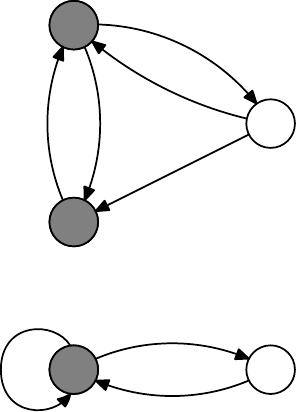}
  \end{center}
  \caption{\label{fig:exfib}On the left, an example of graph morphism that is
not a fibration; on the right, a fibration. Colors on the nodes are used to
implicitly specify the morphisms (arcs are mapped in the only possible way).}
\end{figure}

\begin{defi}
If $\phi:G\to B$ is a fibration, given a (row) vector $\bm u$ of size $n_B$, define its \emph{lifting along $\phi$} as
the vector $\bm u^\phi$  of size $n_G$ given by
\[
	\left(\bm u^\phi\right)_i=u_{\phi(i)}.
\]
\end{defi}
Otherwise said, $\bm u^\phi$ is the vector obtained by copying $\bm u$ along the fibers of $\phi$.

\begin{thm}[\cite{SacUTFCPG}]
\label{th:sachs}
If $\phi:G\to B$ is a fibration surjective on the nodes, given a (row) vector $\bm u$ of size $n_B$ we have
\[
\bm u^\phi G = (\bm u B)^\phi.
\]
\end{thm}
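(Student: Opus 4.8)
The plan is to verify the identity one coordinate at a time. Fix a node $j\in N_G$ and put $y=\phi(j)\in N_B$. Unwinding the definitions, the $j$-th entry of the left-hand side is $\sum_{i\in N_G}u_{\phi(i)}G_{ij}$, while the $j$-th entry of the right-hand side is $(\bm uB)_y=\sum_{l\in N_B}u_lB_{ly}$. Since $G_{ij}$ and $B_{ly}$ are just arc-counting quantities, the first step is to rewrite both expressions as sums indexed by arcs instead of by nodes: the left-hand entry becomes $\sum_a u_{\phi(s_G(a))}$, the sum ranging over all arcs $a$ of $G$ with $t_G(a)=j$, and the right-hand entry becomes $\sum_b u_{s_B(b)}$, the sum ranging over all arcs $b$ of $B$ with $t_B(b)=y$.

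The crux of the argument --- and the only place where the fibration hypothesis enters --- is the claim that $\phi_A$ restricts to a bijection between $A_G^{\to j}:=\{\,a\in A_G:t_G(a)=j\,\}$ and $A_B^{\to y}:=\{\,b\in A_B:t_B(b)=y\,\}$. That $\phi_A$ sends the former into the latter follows from the commutation with source and target, since $t_B(\phi_A(a))=\phi_N(t_G(a))=\phi_N(j)=y$. Surjectivity is exactly the existence part of Definition~\ref{def:fibration}: given $b\in A_B^{\to y}$, since $\phi_N(j)=t_B(b)$ the lifting $\lift bj$ belongs to $A_G^{\to j}$ and maps to $b$. Injectivity is the uniqueness part: if $a,a'\in A_G^{\to j}$ both map to the same $b$, then both are liftings of $b$ at $j$ --- for any $a\in A_G^{\to j}$ is, tautologically, the lifting of $\phi_A(a)$ at $j$ --- so $a=a'$.

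It then remains only to transport the left-hand sum along this bijection: substituting $b=\phi_A(a)$ and using $\phi_N(s_G(a))=s_B(\phi_A(a))$ turns $\sum_{a\in A_G^{\to j}}u_{\phi_N(s_G(a))}$ into $\sum_{b\in A_B^{\to y}}u_{s_B(b)}$, which is precisely the right-hand sum; rewriting this back in matrix form gives $(\bm u^\phi G)_j=((\bm uB)^\phi)_j$ for every $j\in N_G$, which is the asserted equality of vectors. I do not expect a genuine obstacle here: the one thing that needs care is the multigraph bookkeeping, i.e.\ that $G_{ij}$ may be larger than $1$, so the passage from ``sum over nodes $i$'' to ``sum over arcs into $j$'' must carry the correct multiplicities --- but once arcs, rather than ordered pairs of nodes, are taken as the index set, the fibration axiom does all the work and nothing further is required. (Surjectivity of $\phi$ on the nodes is in fact not used in this coordinatewise argument; it is retained as a hypothesis because it describes the situation of interest, in which $\bm u$ is genuinely spread over all of $G$.)
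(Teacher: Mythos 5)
Your proof is correct and takes essentially the same route as the paper: the paper's entire argument is the single remark that Definition~\ref{def:fibration} makes $\phi$ a local isomorphism between the in-neighbourhood of a node $j$ of $G$ and that of $\phi_N(j)$ in $B$, which is exactly the arc-level bijection you construct explicitly and then sum along. Your parenthetical observation that surjectivity on the nodes is not actually used in the coordinatewise identity (it matters only for the downstream spectral consequences, e.g.\ that lifted eigenvectors are nonzero) is also accurate.
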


In other words, one can lift and multiply by $G$, or equivalently multiply by $B$ and then lift: the base
$B$ ``resumes'' the graph $G$ well enough that the multiplication of fiberwise constant vectors by $G$
can be carried on (usually smaller) $B$. The proof of Theorem~\ref{th:sachs} is in fact immediate
once one realizes that Definition~\ref{def:fibration} implies that $\phi$ induces \emph{a local isomorphism} between
the in-neighborhood of a node $x$ of $G$ and the in-neighborhood of $\phi_N(x)$~\cite{BoVGF}.

Theorem~\ref{th:sachs} has the important consequence that every left eigenvector $\bm e$ of $B$ can be lifted to
a left eigenvector $\bm e^\phi$ of $G$, so every eigenvalue of $B$ is an eigenvalue of $G$, and
thus the characteristic polynomial of $B$ divides that of $G$ (hence the name \emph{graph divisor}).
In our case, by the Perron--Frobenius theorem~\cite{BePNMMS},
if $B$ is strongly connected the dominant eigenvector of $B$
is strictly positive, so its lifting is strictly positive, and thus (applying again the Perron--Frobenius theorem) it is the dominant eigenvector of $G$; moreover,
$G$ and $B$ share the same dominant eigenvalue (and thus spectral radius). 

\section{Back to eigenvector centrality}
\label{sec:eigen}

We now get back to eigenvector centrality: Figure~\ref{fig:eig} shows a family of total graphs $G_k$ depending
on an integer parameter $k$, and an associated family of bases $B_k$, with fibrations defined on the nodes following the node labels,
and on the arcs in the only possible way. We will show that when the edge $0\adj 1$ is added to the graphs (obtaining new graphs $G_k'$ and
$B_k'$), all vertices
labeled with $4$, which used to have a smaller score than vertex $1$ in $G_k$, will become more important than vertex $1$ in $G_k'$. 

The intuitive idea behind the graphs $G_k$ is that the new edge makes the vertices labeled with $4$ much closer
to vertex $1$, a high-degree vertex; at the same time, the new edge doubles the number of paths from the vertices
labeled with $6$ to the vertices labeled with $4$. The advantage for vertex $1$ is to get much closer to the
vertices labeled with $4$, but those have a much smaller degree. All in all, the new edge will turn out to be much more advantagous
for the vertices labeled with $4$ than for vertex $1$.

The fundamental property of our counterexample is that albeit $G_k$ is a simple undirected graph
with $k^2-k-6$ vertices, $B_k$ is a general directed multigraph with seven nodes, independently of $k$,
so its adjacency matrix, shown
in Figure~\ref{fig:eig}, is a fixed-sized matrix containing a parameter $k$ due
to the variable number of arcs. Thus, fibrations make it possible to move our
proof from matrices of arbitrary size to a parametric matrix of fixed size.

\begin{figure}
\centering
\begin{tabular}{cc}
\raisebox{3cm}{$G_k$\qquad}&\includegraphics{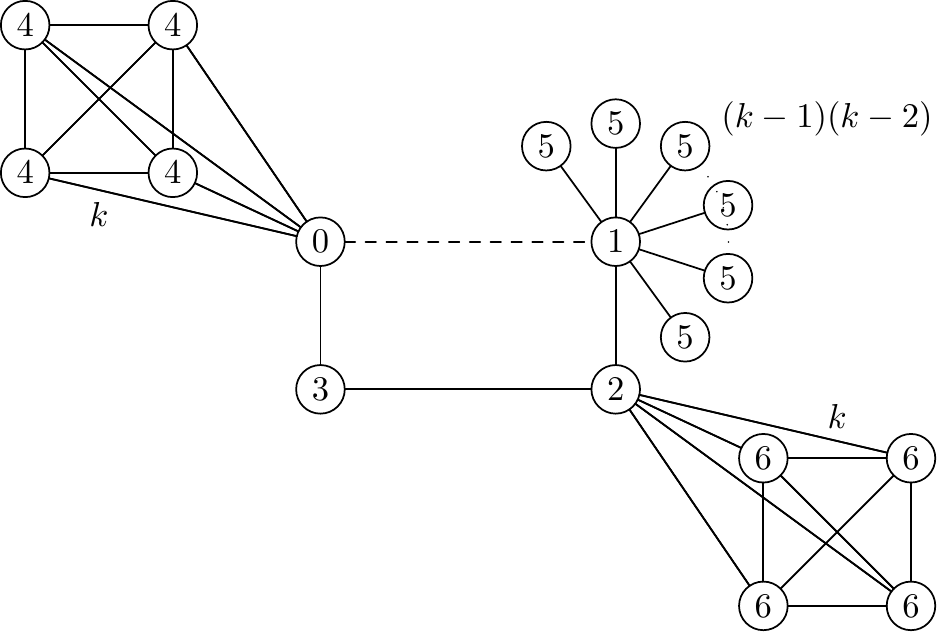}\\
\raisebox{2cm}{$B_k$\qquad}&\includegraphics{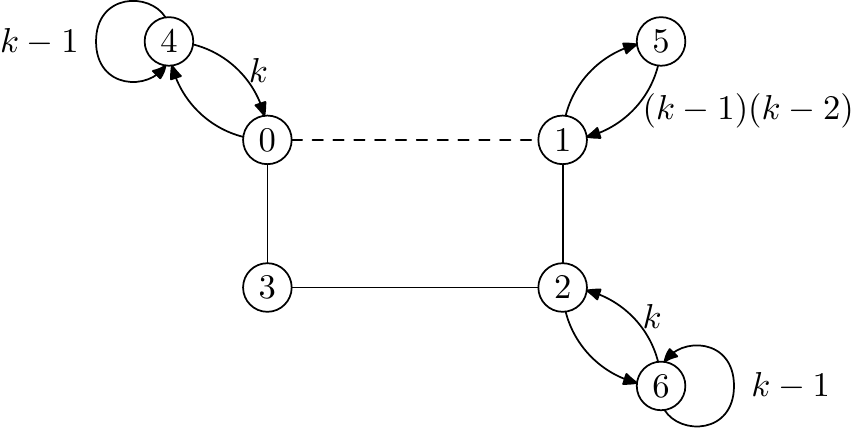}
\end{tabular}

\[
\setlength\arraycolsep{2ex}
\renewcommand\arraystretch{1.5}
B_k=\left(\begin{matrix}
    0 & \fcolorbox{gray}{gray}{0} & 0 & 1 & 1 & 0 & 0\\
    \fcolorbox{gray}{gray}{0} & 0 & 1 & 0 & 0 & 1 & 0\\
    0 & 1 & 0 & 1 & 0 & 0 & 1\\
    1 & 0 & 1 & 0 & 0 & 0 & 0\\
    k & 0 & 0 & 0 & k-1 & 0 & 0\\
    0 & (k-1)(k-2) & 0 & 0 & 0 & 0 & 0\\
    0 & 0 & k & 0 & 0 & 0 & k-1
\end{matrix}\right)
\]

\caption{\label{fig:eig}The parametric counterexample graph for eigenvector centrality:
when adding the edge $0\adj1$ vertex $1$ violates rank monotonicity (top).
The $k$ vertices labeled with $4$ 
form a $(k+1)$-clique with vertex $0$, and the $k$ vertices labeled with $6$ form a $(k+1)$-clique
with vertices $2$; finally, there is a star with $(k-1)(k-2)$ leaves around vertex $1$. 
Arc
labels represent multiplicity. The matrix displayed is the adjacency matrix of $B_k$, with the grayed entries
to be set to $1$ when $0\adj1$ is added to the graph. Table~\ref{tab:eig} shows a set of values for
the size of the cliques and the size of the star causing vertex $1$ to be less important than vertex $0$.}
\end{figure}

\subsection{Sturm polynomials}

There is no way to compute exactly the eigenvalues and eigenvectors of $B_k$. However,
we will be able to control their behavior using \emph{Sturm polynomials}~\cite{RaSATP}, a standard, powerful
technique to analyze and locate real roots of polynomials.

\begin{defi}
If $p(x)$ is a polynomial with real coefficients and $p'(x)$ its derivative, the \emph{Sturm sequence} of polynomials
associated with $p(x)$ is defined by
\begin{align*}
S_0(x) &= p(x)\\
S_1(x) &= p'(x)\\
S_{i+1}(x) &= - S_{i}(x) \bmod S_{i-1}(x)\qquad \text{for $i\geq 1$,} 
\end{align*}
where $S_{i}(x) \bmod S_{i-1}(x)$ is the remainder of the Euclidean division of $S_i(x)$ by $S_{i-1}(x)$.
The sequence stops when $S_{i+1}(x)$ becomes zero, and it is long at most as the degree of $p(x)$.
\end{defi}

Given a real number $a$, the number of \emph{sign variations} $V(a)$ of a Sturm sequence is the number of
sign changes, ignoring zeros, of the sequence $S_0(a)$,~$S_1(a)$,~$S_2(a)$, $\dots\,$. Finally,
if $p(x)$ is \emph{squarefree} (i.e.,
it is not divisible by the square of a noncostant polynomial),
the number of distinct roots of $p(x)$ in the interval $(a\..b]$ is $V(a)-V(b)$; all polynomials we will study will be squarefree.

\subsection{Bounding the dominant eigenvalue}

We now discuss how to bound the dominant eigenvalue $\rho_k$ of $B_k$ (and thus $G_k$); the same results
hold for the dominant eigenvalue $\rho'_k>\rho_k$ of $B'_k$ (and thus $G'_k$). 
The approach we describe will be used
throughout the rest of the paper.
 
Consider the characteristic polynomial of $B_k$ 
\[
p_k(\lambda) = \det(1- \lambda B_k).
\]
We can compute its Sturm polynomials and evaluate them at the points
$k+\frac1{k^2}$ and $k+\frac3{4k}$.
This evaluation leaves us with a pair of rational functions in $k$ for each
Sturm polynomial in the sequence, and such functions have a defined sign
for $k\to\infty$ that depends on the sign of the ratio of the leading
coefficients of their numerator and denominator:
in other words, for large enough $k$ we can count the number of zeroes of
$p_k(\lambda)$ in the interval
$(k+\frac1{k^2}\..k+\frac3{4k}]$, and indeed $p_k(\lambda)$
has exactly one zero in that interval for $k\geq 24$.

If we apply the same technique to the interval $\left(k+\frac3{4k}\..2k\right]$, we find no zeroes. Since $2k$ is
an upper bound for the dominant eigenvalue of both matrices (as it is larger than the geometric mean
of indegree and outdegree of all vertices~\cite{KwaSRDG}), we conclude that the spectral radius $\rho_k$ of $B_k$ 
lies in $\left(k+\frac1{k^2}\..k+\frac3{4k}\right]$. 

\subsection{Bounding the dominant eigenvector}

Armed with this knowledge, we approach the study of the dominant eigenvectors of $B_k$ and $B'_k$.
There is no way to compute them exactly: thus, we resort to the study of
$\mathbf 1(1 -\alpha B_k\bigr)^{-1}$,
because the dominant eigenvector $\bm e$ of $B_k$ and $\bm e'$ of $B'_k$ can be expressed as~\cite{VigSR}
\begin{align}
\label{eq:limit}
\bm e &= \lim_{\alpha\to1/\rho_k}\bigl(1-\alpha\rho_k\bigr)\mathbf1\bigl(1 -\alpha B_k\bigr)^{-1}.\\
\bm e' &= \lim_{\alpha\to1/\rho'_k}\bigl(1-\alpha\rho'_k\bigr)\mathbf1\bigl(1 -\alpha B'_k\bigr)^{-1}.
\end{align}
In fact,
$(1 -\alpha B_k\bigr)^{-1}$ is a slightly different
way (up to a constant factor) to define the \emph{resolvent} of $B_k$~\cite{DuSLO1}, but the formulation we use here will make it easier to apply the results
we will develop in the sections on Katz's index and PageRank.

While we have no way to compute exactly the eigenvectors of $B_k$, we can 
compute symbolically $\mathbf1\bigl(1 -\alpha B_k\bigr)^{-1}$, thus obtaining for each node of $B_k$
a rational function in $\alpha$ whose coefficients are polynomials in $k$, and do the same for $B'_k$.

We will be interested in comparing eigenvector centralities, that is, in proving statements (for nodes $x$ and $y$ of $B_k$) of the form
\[
\frac{e_x}{e_y}=\lim_{\alpha\to1/\rho_k}\frac{\left[\bigl(1-\alpha\rho_k\bigr)\mathbf1\bigl(1 -\alpha B_k\bigr)^{-1}\right]_x}{\left[ \bigl(1-\alpha\rho_k\bigr)\mathbf1\bigl(1 -\alpha B_k\bigr)^{-1}\right]_y}>1.
\]
However,   
\[
\frac{e_x}{e_y}=\lim_{\alpha\to1/\rho_k}\frac{\left[\mathbf1\bigl(1 -\alpha B_k\bigr)^{-1}\right]_x}{\left[ \mathbf1\bigl(1 -\alpha B_k\bigr)^{-1}\right]_y}
=\lim_{\alpha\to1/\rho_k}\frac{\left[\mathbf1\cdot\adjugate{1-\alpha B_k}\right]_x}{\left[\mathbf1\cdot\adjugate{1-\alpha B_k}\right]_y}
 =\frac{\left[\mathbf1\cdot\adjugate{1- B_k/\rho_k}\right]_x}{\left[\mathbf1\cdot\adjugate{1- B_k/\rho_k}\right]_y},
\]
where we used the fact that the inverse is the \emph{adjugate matrix}~\cite{GanTM} divided by the determinant
\[
	\adjugate{1-\alpha B_k}=\bigl(1 -\alpha B_k\bigr)^{-1}\cdot \det(1 -\alpha B_k).
\]
The final substitution can be performed safely because the column-sums of the adjugate
must be nonzero in a neighborhood of $\rho_k$, or the limits~(\ref{eq:limit}) would not be finite and positive.
The advantage is that the entries of $\adjugate{1-\alpha B_k}$ are just polynomials. The same considerations hold for $B'_k$.

We thus define, for every node $x$,
\begin{align*}
 \pre_\alpha(x)&= \left[\mathbf1\cdot\adjugate{1 -\alpha B_k}\right]_x\\
 \post_\alpha(x)&=  \left[\mathbf1\cdot\adjugate{1 -\alpha B'_k}\right]_x.\\
\end{align*}
For example,
\begin{multline*}
\pre_\alpha(0) = (-2k^3 + 7k^2 - 7k + 2)\alpha^6 + (2k^2 - 7k + 5)\alpha^5 + (2k^3 - 6k^2 + 6k)\alpha^4\\ + (k^3 - 5k^2 + 9k - 7)\alpha^3 + (-k^2 + k - 3)\alpha^2 + (-k + 2)\alpha + 1.
\end{multline*}
Note that in the adjacency matrix of $B_k$ just three rows contain $k$: as a consequence, the degree in $k$
of the coefficients of the polynomials in $\alpha$ is at most three.

Since $k+\frac3{4k}>\rho_k$, we start by showing that 
\[
\pre_{1/\left(k+\frac3{4k}\right)}(1)>\pre_{1/\left(k+\frac3{4k}\right)}(4)
\]
and once again, since we are dealing with rational functions in $k$, for enough large $k$ the difference 
\[
\pre_{1/\left(k+\frac3{4k}\right)}(1)-\pre_{1/\left(k+\frac3{4k}\right)}(4)
\]
has a constant sign: in particular, for $k\geq 53$ it is positive. The same analysis, however, shows that 
\[
\post_{1/\left(k+\frac3{4k}\right)}(1) < \post_{1/\left(k+\frac3{4k}\right)}(4)
\] 
when $k\geq 3$.

We are now going to extend our inequalities to a range comprising $1/\rho_k$. If we consider the Sturm polynomials (in $\alpha$)
of \[
\pre_\alpha(1)- \pre_\alpha(4),\] we find no zero between $\alpha = 1/\left(k+\frac3{4k}\right) <1/ \rho_k$
and $\alpha = 1/\left(k+\frac1{k^2}\right) >1/ \rho_k$ for $k\geq 53$. Hence, for $1/\left(k+\frac3{4k}\right)< \alpha\leq 1/\left(k+\frac1{k^2}\right)$
\[
\pre_\alpha(1)>
\pre_\alpha(4),
\] 
so, in particular,
\[
\pre_{1/\rho_k}(1)>\pre_{1/\rho_k}(4),
\] 
showing that the eigenvector centrality of node $1$ is larger than that of node $4$ for $k\geq 53$.
A similar analysis for $\post$ shows that 
\[
\post_{1/\rho'_k}(1)<\post_{1/\rho'_k}(4)
\] 
for $k\geq 1$.
Thus, in the graph $G_k$ the addition of the edge $0\adj1$ causes
vertex $1$ to violate rank monotonicity. Further analysis of the same kind on the remaining nodes show
that only the vertices labeled with $4$ change their importance relatively to vertex $1$,
which implies that vertex $1$ is demoted by $k$ positions.
Finally, studying the polynomial $\pre_\alpha(1)-\pre_\alpha(0)$ it is easy to see that in our example
vertex $1$ is more important than vertex $0$ for $k\geq 54$.

While all the previous discussions are valid for $k\geq 54$, numerical computations show that the result indeed extends to all $k\geq 7$.
Hence:

\begin{thm}
Eigenvector centrality is not rank monotone (top violation) on the graphs $G_k$ of Figure~\ref{fig:eig} for $k\geq 7$.
\end{thm}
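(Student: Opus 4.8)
The plan is to push the whole computation down from the graphs $G_k$, whose number of vertices grows with $k$, to the fixed-size bases $B_k$ of Figure~\ref{fig:eig}, using the fibration machinery of Section~\ref{sec:fib}. First I would check that $B_k$ and $B'_k$ are strongly connected, so that by Theorem~\ref{th:sachs} and Perron--Frobenius the dominant eigenvector of each base lifts to the dominant eigenvector of the corresponding total graph, and $G_k$, $B_k$ share a spectral radius $\rho_k$ while $G'_k$, $B'_k$ share $\rho'_k$. Since a lifting is constant on each fiber, comparing eigenvector centralities of vertices of $G_k$ reduces to comparing the corresponding entries of the dominant eigenvector of $B_k$. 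It then suffices to establish three facts about the $7\times7$ matrices $B_k$ and $B'_k$: node $1$ dominates node $4$ in $B_k$; node $4$ dominates node $1$ in $B'_k$; and node $1$ dominates node $0$ in $B_k$, so that the losing endpoint of the new edge is the more important one and the violation is a \emph{top} violation. A fourth pass over the remaining five nodes, carried out by the same method, will show that only the fiber over node $4$ changes position relative to node $1$, which is the demotion statement.

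Since neither the eigenvalues nor the eigenvectors of $B_k$ have a closed form, the next step is to localize $\rho_k$. I would take the characteristic polynomial $p_k(\lambda)=\det(1-\lambda B_k)$, compute its Sturm sequence, and evaluate that sequence at $\lambda=k+\frac1{k^2}$, $\lambda=k+\frac3{4k}$ and $\lambda=2k$. Each evaluation is a rational function of $k$ whose sign for large $k$ is determined by the ratio of the leading coefficients of numerator and denominator, so the sign-variation counts $V(\cdot)$ eventually stabilize; comparing them shows that $p_k$ has exactly one root in $\bigl(k+\frac1{k^2}\..k+\frac3{4k}\bigr]$ and none in $\bigl(k+\frac3{4k}\..2k\bigr]$. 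As $2k$ exceeds the geometric mean of in- and out-degree at every node, it bounds the spectral radius from above, so $\rho_k\in\bigl(k+\frac1{k^2}\..k+\frac3{4k}\bigr]$; the same computation on $B'_k$ gives $\rho'_k$ in the same interval, with $\rho'_k>\rho_k$ because $B'_k$ has strictly more arcs.

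To compare eigenvector-centrality entries without knowing $\rho_k$ exactly, I would use the resolvent identity $\bm e=\lim_{\alpha\to1/\rho_k}(1-\alpha\rho_k)\mathbf1(1-\alpha B_k)^{-1}$ and reduce the ratio $e_x/e_y$ to the ratio of the adjugate column-sums $\pre_\alpha(x)=[\mathbf1\cdot\adjugate{1-\alpha B_k}]_x$ evaluated at $\alpha=1/\rho_k$ — these are genuine polynomials in $\alpha$ with coefficients polynomial in $k$ of degree at most $3$, since only three rows of $B_k$ contain $k$. The substitution $\alpha=1/\rho_k$ is legitimate because these column-sums must be nonzero near $\rho_k$, otherwise the limit defining $\bm e$ would fail to be finite and positive. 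To obtain, say, $\pre_{1/\rho_k}(1)>\pre_{1/\rho_k}(4)$ I would first verify the inequality at the explicit point $\alpha=1/\bigl(k+\frac3{4k}\bigr)$ — once more a matter of checking the sign of a rational function of $k$, valid for large $k$ — and then run a Sturm analysis in $\alpha$ on $\pre_\alpha(1)-\pre_\alpha(4)$ to show it has no root between $\alpha=1/\bigl(k+\frac3{4k}\bigr)$ and $\alpha=1/\bigl(k+\frac1{k^2}\bigr)$, so its sign is constant on an $\alpha$-interval containing $1/\rho_k$. The analogous argument on $\post_\alpha$ yields $\post_{1/\rho'_k}(1)<\post_{1/\rho'_k}(4)$, and the same template applied to $\pre_\alpha(1)-\pre_\alpha(0)$ gives the top-violation inequality. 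Every step produces an explicit threshold on $k$; taking the maximum over all of them gives a bound (around $k\geq54$) above which the argument is fully rigorous, and a direct numerical check of the finitely many smaller values extends the statement down to $k\geq7$.

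The hard part will be keeping the various ``for all large $k$'' steps genuinely uniform. Each Sturm remainder and each adjugate column-sum is a polynomial or rational function in $k$, and I must be sure its leading coefficient does not vanish — or, if it vanishes for finitely many $k$, that those lie inside the range later checked by hand — so that the sign stabilizes and the sign-variation counts are eventually constant. This has to be tracked through the entire Sturm sequence of $p_k$, through the eigenvalue localization at the three test points, and through four separate node comparisons for both $B_k$ and $B'_k$. The algebra itself — parametric characteristic polynomials, adjugates, and Euclidean divisions of polynomials with coefficients in $k$ — is bulky but mechanical and is precisely what a computer algebra system handles; the conceptual content is just the fibration reduction together with the Sturm-sequence localization of the dominant eigenvalue and eigenvector.
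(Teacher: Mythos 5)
Your proposal follows essentially the same route as the paper: the fibration reduction to the fixed seven-node base $B_k$, Sturm-sequence localization of $\rho_k$ in $\bigl(k+\frac1{k^2}\..k+\frac3{4k}\bigr]$ using the bound $2k$ on the spectral radius, the passage to adjugate column-sums $\pre_\alpha(\cdot)$ and $\post_\alpha(\cdot)$, the sign check at $\alpha=1/\bigl(k+\frac3{4k}\bigr)$ followed by a Sturm argument in $\alpha$ to cover $1/\rho_k$, the comparison of nodes $1$ and $0$ for the top-violation claim, and a direct numerical check to push the bound from roughly $k\geq54$ down to $k\geq7$. This matches the paper's argument in both structure and detail, including the thresholds involved.
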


By gaging accurately the size of the star around $1$ it is possible to
find also bottom violations of rank monotonicity. We have
tabulated the first few values of $k$ for which there is a suitable star, and we
show them in Table~\ref{tab:eig}: we conjecture that there is a function of $k$ of
order $\Theta(k^2)$ which gives a correct real value for $s$, and examples emerge when
such value is very close to an integer. 

\begin{table}
\centering
\begin{tabular}{ll|ll|ll|ll|ll|ll}
\multicolumn{1}{c}{$k$}&\multicolumn{1}{c|}{$s$}&\multicolumn{1}{c}{$k$}&\multicolumn{1}{c|}{$s$}&\multicolumn{1}{c}{$k$}&\multicolumn{1}{c|}{$s$}&\multicolumn{1}{c}{$k$}&\multicolumn{1}{c|}{$s$}&\multicolumn{1}{c}{$k$}&\multicolumn{1}{c|}{$s$}&\multicolumn{1}{c}{$k$}&\multicolumn{1}{c}{$s$}\\
\hline
8 & 40 &   17 & 217 & 30 & 733 &    40 & 1344  &  57 & 2815 & 68 & 4059 \\
9 & 53 &   18 & 246 & 31 & 786 &    43 & 1564  &  59 & 3024 & 69 & 4184 \\
10 & 67 &  19 & 276 & 32 & 840 &    44 & 1641  &  61 & 3241 & 70 & 4310 \\
11 & 83 &  24 & 456 & 34 & 955 &    45 & 1720  &  62 & 3352 & 72 & 4569 \\
12 & 101 & 26 & 541 & 35 & 1015&    48 & 1968  &  63 & 3465 & 73 & 4701 \\
14 & 142 & 27 & 586 & 36 & 1077 &   50 & 2143  &  64 & 3580 & 74 & 4835 \\
15 & 165 & 28 & 633 & 37 & 1141 &   51 & 2233  &  65 & 3697 & 75 & 4971 \\
16 & 190 & 29 & 682 & 38 & 1207 &   56 & 2713  &  66 & 3816 & 76 & 5109 \\
\end{tabular}
\vspace*{2mm}
\caption{\label{tab:eig}Pairs of values providing bottom violations of
rank monotonicity for eigenvector centrality: $k$ is the same as in Figure~\ref{fig:eig},
and $s$ is the size of the star around $1$ (in Figure~\ref{fig:eig}, $s=(k-1)(k-2)$).}
\end{table}

\section{Seeley's index}
\label{sec:seeley}

A natural variant of eigenvector centrality is Seeley's index~\cite{SeeNRI}, the steady state of the  (uniform)
random walk on the graph (for more details, see~\cite{BoVAC}). The situation here is quite different: it is a
well-known fact that if the graph is connected the steady-state probability of
vertex $x$ is simply $d(x)/2m$, where $d(x)$ is the degree of $x$---essentially, the centrality of a vertex 
is just its $\ell_1$-normalized degree. 
As a consequence:

\begin{thm}
\label{thm:seeleyrank}
Seeley's index is strictly rank monotone on undirected graphs.
\end{thm}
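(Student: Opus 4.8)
The plan is to reduce the statement to the closed-form expression recalled right before it: on a connected undirected graph $G$ with $m$ edges, Seeley's index of a vertex $x$ equals $\sigma_G(x)=d(x)/(2m)$, where $d(x)$ is the degree of $x$. I would adopt this same formula as the value of Seeley's index on an arbitrary undirected graph; this is legitimate because $d(x)/(2m)$ is always a stationary distribution of the uniform random walk and it coincides with the usual (unique) one on every connected graph, so no connectivity hypothesis is actually needed below. The first observation is then purely formal: since $2m$ is a fixed positive constant, on any single graph the order induced by Seeley's index coincides with the order induced by the degree, i.e.\ $\sigma_G(z)\le\sigma_G(x)$ if and only if $d_G(z)\le d_G(x)$.

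Next I would record the effect of the perturbation. Fix two non-adjacent vertices $x$ and $y$ and let $G'$ be obtained from $G$ by adding the edge $x\adj y$. Then $G'$ has $m+1$ edges, $d_{G'}(x)=d_G(x)+1$, $d_{G'}(y)=d_G(y)+1$, and $d_{G'}(z)=d_G(z)$ for every vertex $z\neq x,y$. By the symmetry between the two endpoints it suffices to establish the two implications for $x$, the computation at $y$ being identical. So let $z\neq x,y$ with $\sigma_G(z)\le\sigma_G(x)$, that is, $d_G(z)\le d_G(x)$. Then
\[
d_{G'}(z)=d_G(z)\le d_G(x)<d_G(x)+1=d_{G'}(x),
\]
and dividing through by the positive constant $2(m+1)$ yields $\sigma_{G'}(z)<\sigma_{G'}(x)$. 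This is exactly the strict form of rank monotonicity at $x$, and it a fortiori gives the two non-strict implications as well.

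The argument has essentially no hard step once the degree formula is available; the only points that deserve a line of justification are (i) the use of $d(x)/(2m)$ as the value of Seeley's index on possibly disconnected graphs — alternatively, one may simply observe that adding an edge to a connected graph keeps it connected, so the connected-case formula already suffices on the class of connected graphs — and (ii) the harmless degenerate case $m=0$, which never occurs once an edge has been added and is trivially dispatched for $G$ itself. I would also remark, matching Table~\ref{tab:summ}, that the very same inequality rules out any counterexample here, and that, in contrast with every other measure treated in the paper, neither endpoint of the new edge is ever demoted.
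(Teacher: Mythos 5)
Your proof is correct and follows exactly the route the paper intends: the paper derives the theorem as an immediate consequence of the identity between Seeley's index and $\ell_1$-normalized degree, and your write-up simply makes explicit the degree bookkeeping ($d_{G'}(z)=d_G(z)\le d_G(x)<d_{G'}(x)$) that the paper leaves implicit. The extra care you take with disconnected graphs and the $m=0$ case is harmless and, if anything, slightly more thorough than the original.
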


The situation is almost the same for score monotonicity if we assume $\ell_1$-normalization:
\begin{thm}
\label{thm:seeleyscore}
Seeley's index ($\ell_1$-normalized degree) is score monotone on undirected graphs, except in the case of a graph formed by a star graph and one or more additional isolated vertices.
\end{thm}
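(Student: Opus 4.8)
The plan is to compute directly the Seeley index (the $\ell_1$-normalized degree) before and after adding an edge $x\adj y$ to a simple undirected graph $G$, and to show that it strictly increases at both endpoints except in one degenerate configuration. Write $d(\cdot)$ for the degree in $G$ and $m=|E_G|$, so that $c_G(v)=d(v)/(2m)$ for every vertex $v$; after adding the non-adjacent pair $x\adj y$ we obtain $G'$ with $m+1$ edges, $d_{G'}(x)=d(x)+1$, $d_{G'}(y)=d(y)+1$, and $d_{G'}(z)=d(z)$ for $z\neq x,y$. Thus the condition $c_G(x)<c_{G'}(x)$ is equivalent to
\[
\frac{d(x)}{2m}<\frac{d(x)+1}{2(m+1)},
\]
i.e.\ to $d(x)(m+1)<(d(x)+1)m$, i.e.\ to $d(x)<m$; symmetrically the condition at $y$ is $d(y)<m$. (One has to be slightly careful about the edge case $m=0$, where the formula $d/2m$ is undefined; but then $G$ is edgeless, adding $x\adj y$ gives both endpoints degree $1$ out of $m+1=1$ edge, and that is exactly the excluded star-plus-isolated-vertices case with a one-edge star — see below.) So the whole statement reduces to the combinatorial claim: for every vertex $v$ incident to a new edge, $d_G(v)<|E_G|$, unless $G$ is a star together with zero or more isolated vertices.

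The key observation is the elementary degree bound: in a simple graph, $d(v)\le m$ always, with equality iff every edge of $G$ is incident to $v$, i.e.\ iff $v$ is a dominating vertex of a graph whose edges all pass through $v$ — equivalently, the edge set of $G$ is exactly a star centered at $v$ (possibly empty), with the remaining vertices isolated. Hence $d(x)=m$ forces $G$ to be such a star centered at $x$; in that case $x$ is adjacent to every other non-isolated vertex, so the only vertices non-adjacent to $x$ are the isolated ones, and adding an edge from $x$ to an isolated vertex indeed fails to raise $c_G(x)$: $d(x)$ goes from $m$ to $m+1$ while the edge count goes from $m$ to $m+1$, leaving $c(x)=1$ unchanged (here $y$, previously isolated, does strictly increase, so the violation is one-sided, consistent with the paper's narrative). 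Symmetrically $d(y)=m$ forces $G$ to be a star centered at $y$. The two possibilities "$G$ is a star centered at $x$" and "$G$ is a star centered at $y$" with $x,y$ non-adjacent are exactly the excluded family: a star graph plus one or more additional isolated vertices (the center being $x$ or $y$, the fresh leaf being the other endpoint). In every other case $d(x)<m$ and $d(y)<m$ strictly, so both $c_{G'}(x)>c_G(x)$ and $c_{G'}(y)>c_G(y)$, as required.

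There is essentially no obstacle here; the only point demanding care is bookkeeping of the degenerate cases — the empty graph, the single-edge star, and making sure the description "a star graph and one or more additional isolated vertices" precisely matches the algebraic condition "$d_G(x)=|E_G|$ or $d_G(y)=|E_G|$ for a non-adjacent pair $x,y$". One should also note that the theorem as stated asks for the two-sided strict inequality of the score-monotonicity definition: in the excluded family that fails at exactly one endpoint (the center of the star), while at the other endpoint it still holds — so the excluded family is genuinely a family of counterexamples to the \emph{conjunction}, which is what the definition requires, and the "except" clause in the statement is sharp.
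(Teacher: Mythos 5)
Your proof is correct and follows essentially the same route as the paper's: both reduce the inequality $c_G(x)<c_{G'}(x)$ to $d(x)<m$ and then identify the equality case $d(x)=m$ with a star plus isolated vertices. Your version is merely more careful about the degenerate cases ($m=0$, the one-sidedness of the failure), which the paper leaves implicit.
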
 
\begin{proof}
When we add an edge between $x$ and $y$ in a graph with $m$ edges, the score of $x$ changes from $d(x)/2m$ to $(d(x)+1)/(2m+2)$. If we require
\[
\frac{d(x)+1}{2m+2}> \frac{d(x)}{2m}
\]
we obtain $d(x)< m$.  Since obviously $d(x)\leq m$, the condition is always true except when $d(x)=m$, which corresponds to
the case of a disconnected graph formed by a star graph and by additional isolated vertices.
Indeed, in that case adding an edge between an isolated vertex and the center of the star will not change the score
of the center.
\end{proof}

\section{Interlude: graph fibrations and damped spectral rankings}

The key observation used to build the counterexample for eigenvector centrality was Theorem~\ref{th:sachs},
stating that lifting of vectors commutes with matrix multiplication. 

The theorem is true also for \emph{weighted} graphs, as long as the fibration preserves weights
and adjacency matrices are defined by adding the weights of all arcs between two nodes. 
An interesting consequence of this fact is the following:
\begin{thm}
\label{th:sachsex}[\cite{BLSGFGIP}]
Let $G$ and $B$ be weighted graphs, and $\phi:G\to B$ be a surjective weight-preserving fibration; then,
given a (row) vector $\bm v$ of size $n_B$ we have
\[
\bm v^\phi (1-\alpha G)^{-1} = \bigl(\bm v (1-\alpha B)^{-1}\bigr)^\phi.
\]
\end{thm}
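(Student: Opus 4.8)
The plan is to reduce the statement to the already-established Theorem~\ref{th:sachs} by expanding the resolvent $(1-\alpha G)^{-1}$ as a Neumann (geometric) series. Concretely, for $\alpha$ small enough that $\|\alpha G\|<1$ (equivalently, $|\alpha|<1/\rho(G)$, where $\rho(G)$ is the spectral radius), we have the absolutely convergent expansion
\[
(1-\alpha G)^{-1} = \sum_{n\geq 0}\alpha^n G^n,
\]
and similarly for $B$. The idea is then to apply the lifting operation termwise and use that lifting commutes with multiplication by the adjacency matrix.

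First I would record the elementary fact that lifting along $\phi$ is linear in its argument: $(\bm u + \bm w)^\phi = \bm u^\phi + \bm w^\phi$ and $(c\bm u)^\phi = c\,\bm u^\phi$, which is immediate from the pointwise definition $(\bm u^\phi)_i = u_{\phi(i)}$. Second, I would iterate Theorem~\ref{th:sachs}: that theorem gives $\bm u^\phi G = (\bm u B)^\phi$ for every row vector $\bm u$ of size $n_B$; applying it $n$ times yields $\bm v^\phi G^n = (\bm v B^n)^\phi$ for every $n\geq 0$ (the case $n=0$ being trivial). Third, I would compute, for $|\alpha|<1/\rho(G)$,
\[
\bm v^\phi (1-\alpha G)^{-1} = \sum_{n\geq 0}\alpha^n\,\bm v^\phi G^n = \sum_{n\geq 0}\alpha^n\,(\bm v B^n)^\phi = \Bigl(\sum_{n\geq 0}\alpha^n\,\bm v B^n\Bigr)^{\!\phi} = \bigl(\bm v(1-\alpha B)^{-1}\bigr)^{\!\phi},
\]
where the next-to-last equality uses linearity and continuity of the (finite-dimensional, hence bounded) lifting map to pass it through the convergent series. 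Since lifting commutes with multiplication by $G$, one has $\rho(B)\le\rho(G)$ (the characteristic polynomial of $B$ divides that of $G$, as noted after Theorem~\ref{th:sachs}), so the series for $B$ converges on the same disc and $(1-\alpha B)^{-1}$ is well defined there.

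Finally, I would remove the smallness restriction on $\alpha$ by analytic continuation: both sides of the claimed identity are rational (hence meromorphic) functions of $\alpha$ — the left-hand side has entries that are ratios of polynomials in $\alpha$ with denominator $\det(1-\alpha G)$, the right-hand side ratios with denominator $\det(1-\alpha B)$, which divides $\det(1-\alpha G)$ — and they agree on an open disc around $0$, hence agree as rational functions wherever both are defined, i.e.\ for all $\alpha$ for which $1-\alpha G$ is invertible. I do not expect any serious obstacle here; the only point requiring a word of care is justifying that the lifting map may be moved inside the infinite sum, which is routine since it is a fixed linear map between finite-dimensional spaces (it is just coordinate duplication), and the analytic-continuation step, which is standard. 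The substantive content is entirely carried by the iterated application of Theorem~\ref{th:sachs}.
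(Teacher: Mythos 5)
Your proof is correct and follows essentially the same route as the paper: the paper's own argument is precisely the Neumann-series expansion of the resolvent combined with termwise application of Theorem~\ref{th:sachs}. Your additional remarks on convergence and analytic continuation are careful refinements of details the paper leaves implicit, not a different approach.
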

The proof is simple:
\begin{multline*}
\bm v^\phi (1-\alpha G)^{-1} = \bm v^\phi \sum_{i=0}^\infty (\alpha G)^i =
\sum_{i=0}^\infty\bm v^\phi  (\alpha G)^i \\= \sum_{i=0}^\infty\bigl(\bm v  (\alpha B)^i\bigr)^\phi =
\Bigl(\bm v\sum_{i=0}^\infty(\alpha B)^i\Bigr)^\phi = \bigl(\bm v (1-\alpha B)^{-1}\bigr)^\phi.  
\end{multline*}

Theorem~\ref{th:sachsex} makes it possible to apply the techniques we used for eigenvector centrality to general \emph{damped spectral rankings}, as defined 
in~\cite{VigSR}, of which both Katz's index and PageRank are special instances. Both centralities can be defined, up to a constant multiplying factor, as
\[
\bm v (1-\alpha M)^{-1}
\]
for suitable preference vector $\bm v$ and for a matrix $M$ derived from the adjacency matrix of the graph.

\section{Katz's index}
\label{sec:katz}

Recall that Katz's index~\cite{KatNSIDSA} is defined as
\[
\mathbf1\sum_{i=0}^\infty \alpha^i G^i = \mathbf1(1-\alpha G)^{-1},
\]
where $0\leq\alpha<1/\rho(G)$ (here, $\rho(G)$ is the spectral radius of $G$). 
It is trivially score monotone, but we will prove that it is not rank monotone.

First of all, we note that if $\alpha$ is small enough Katz's index will be strictly rank monotone:
\begin{thm}
\label{th:katzzero}
Let $G$ be a graph and $\rho$ its spectral radius.
Then there is an $\bar\alpha<1/\rho$ such that for $\alpha\leq\bar\alpha$  
Katz's index is strictly rank monotone on $G$.
\end{thm}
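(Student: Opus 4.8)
The plan is to show that for small enough damping $\alpha$ the addition of the edge $x\adj y$ raises the Katz score of each of its two endpoints strictly more than it raises the score of any other vertex, and to combine this with the trivial observation that adding an edge never decreases any Katz score. Write $c^\alpha_G(u)=\bigl[\mathbf1(1-\alpha G)^{-1}\bigr]_u=\sum_{i\ge0}\alpha^i w_i(u)$, where $w_i(u)$ counts length-$i$ walks ending at $u$. Since every walk of $G$ is still a walk of $G'$, we get $w'_i(u)\ge w_i(u)$ and hence $c^\alpha_{G'}(u)\ge c^\alpha_G(u)$ for all vertices $u$ and all $\alpha$ in the common domain; call $\Delta_\alpha(u):=c^\alpha_{G'}(u)-c^\alpha_G(u)\ge0$ the \emph{gain} of $u$.

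The key step is a closed form for the gains. The resolvent identity $(1-\alpha G')^{-1}-(1-\alpha G)^{-1}=\alpha\,(1-\alpha G')^{-1}(G'-G)(1-\alpha G)^{-1}$, together with the fact that $G'-G$ is the matrix whose only nonzero entries are a $1$ in positions $(x,y)$ and $(y,x)$, yields after left-multiplication by $\mathbf 1$
\[
\Delta_\alpha(u)=\alpha\Bigl(c^\alpha_{G'}(x)\,M_{yu}+c^\alpha_{G'}(y)\,M_{xu}\Bigr),\qquad M:=(1-\alpha G)^{-1}.
\]
Subtracting this identity for $u=x$ and for an arbitrary $z\neq x,y$ gives
\[
\Delta_\alpha(x)-\Delta_\alpha(z)=\alpha\Bigl(c^\alpha_{G'}(x)\,(M_{yx}-M_{yz})+c^\alpha_{G'}(y)\,(M_{xx}-M_{xz})\Bigr).
\]
Now evaluate the bracket at $\alpha=0$: there $M$ is the identity matrix and both resolvent row sums are $1$, so $M_{yx}=M_{yz}=M_{xz}=0$ (the three vertices being pairwise distinct) while $M_{xx}=1$, making the bracket equal to $1$. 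The bracket is a rational function of $\alpha$ with no pole at the origin, hence continuous there, so it remains positive on some interval $[0,\delta]$, and therefore $\Delta_\alpha(x)>\Delta_\alpha(z)$ for every $\alpha\in(0,\delta]$; the same argument with $x$ and $y$ interchanged handles the endpoint $y$.

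Because $G$ is a fixed finite graph there are only finitely many non-adjacent pairs $\{x,y\}$ and finitely many admissible $z$, so I can take $\bar\alpha$ to be the minimum of the corresponding thresholds $\delta$, further intersected with $\min_{G'}1/\rho(G')$, which is a positive number strictly below $1/\rho(G)=1/\rho$ (adding an edge cannot decrease the spectral radius). For $\alpha\le\bar\alpha$ and any admissible $x,y,z$, if $c^\alpha_G(z)\le c^\alpha_G(x)$ then $c^\alpha_{G'}(z)=c^\alpha_G(z)+\Delta_\alpha(z)<c^\alpha_G(x)+\Delta_\alpha(x)=c^\alpha_{G'}(x)$, and symmetrically for $y$; this is precisely strict rank monotonicity on $G$. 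The only genuinely load-bearing ingredient is the closed form for $\Delta_\alpha$: it converts the informal intuition ``an endpoint of the new edge profits most'' into an inequality whose sign at $\alpha=0$ can simply be read off, and it is the same resolvent manipulation underlying Theorem~\ref{th:sachsex} and the later Katz/PageRank analyses, so its use here is natural. The point to be careful about is the uniformity of $\bar\alpha$: it depends on $G$, and the counterexamples elsewhere in the paper show that no $\bar\alpha$ can work simultaneously for all graphs—so fixing $G$ at the outset is essential.
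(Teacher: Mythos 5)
Your proof is correct, but it takes a genuinely different route from the paper's. The paper argues by a second-order expansion of the Neumann series at $\alpha=0$: it writes $\mathbf1\sum_i\alpha^iG^i=\mathbf1+\alpha\mathbf1G+O(\alpha^2)$, observes that the first-order part orders vertices exactly by degree with gaps of order $\alpha$, and concludes that for small $\alpha$ the Katz order on $G$ and on every $G'$ refines the degree order; strict rank monotonicity then transfers from degree (the paper's earlier ``sanity check''). You instead derive an exact rank-two formula for the score increments via the resolvent identity, $\Delta_\alpha(u)=\alpha\bigl(c^\alpha_{G'}(x)M_{yu}+c^\alpha_{G'}(y)M_{xu}\bigr)$ with $M=(1-\alpha G)^{-1}$, and read off that the bracket in $\Delta_\alpha(x)-\Delta_\alpha(z)$ equals $1$ at $\alpha=0$, hence is positive on a neighbourhood; adding $\Delta_\alpha(x)>\Delta_\alpha(z)$ to the hypothesis $c^\alpha_G(z)\le c^\alpha_G(x)$ gives the strict conclusion. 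Both arguments are sound and both finish by minimizing over the finitely many pairs $\{x,y\}$. The paper's version is shorter and makes the duality with Theorem~\ref{th:prlim} (PageRank $\to$ degree as $\alpha\to1$) transparent; yours is more self-contained, yields an explicit perturbation formula that could in principle give quantitative lower bounds on $\bar\alpha$, and directly formalizes the intuition that the endpoints of the new edge gain the most. Two small remarks: your claim that $\min_{G'}1/\rho(G')$ is \emph{strictly} below $1/\rho(G)$ can fail for disconnected $G$ (it is also unnecessary, since any positive $\bar\alpha$ below all thresholds and below $1/\rho$ suffices); and, like the paper's own proof, your argument really establishes the claim for $0<\alpha\le\bar\alpha$, since at $\alpha=0$ all Katz scores are tied and strictness degenerates.
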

\begin{proof}
We remark that
\[
\mathbf1\sum_{i=0}^\infty \alpha^i G^i = \mathbf1 + \alpha\mathbf1G + \alpha^2 \mathbf1 G^2\sum_{i=0}^\infty \alpha^i G^i.  
\] 
The relative node importance in $\mathbf1 + \alpha\mathbf1G$ is exactly that defined by degree, and score differences are $O(\alpha)$ for $\alpha\to0$. However,  
\[
\Bigl\|\alpha^2\mathbf1 G^2 \sum_{i=0}^\infty \alpha^iG^i\Bigr\|_\infty \leq \alpha^2\bigl\|G^2\bigr\|_\infty\Bigl\|\sum_{i=0}^\infty \alpha^i G^i\Bigr\|_\infty,  
\]
and given any $0<\alpha'<1/\rho_k$ for $\alpha\leq\alpha'$ the last expression is $O\bigl(\alpha^2\bigr)$ for $\alpha\to 0$. Thus, there is an $\alpha_G$ such that,
for $\alpha\leq\alpha_G$, the relative importance of a node of $G$ is that defined by its degree. If we minimize over all such $\alpha$'s for all
graphs obtained by adding an edge to $G$, we obtain the value $\bar\alpha$ of the statement.
\end{proof}

On the other hand, we are now going to provide an example on which 
rank monotonicity is not satisfied when we go sufficiently close to $1/\rho$. 
We can use the same counterexample as for eigenvector centrality (Figure~\ref{fig:eig}): 
in view of Theorem~\ref{th:sachsex}
the analysis performed
in Section~\ref{sec:eigen} already shows that Katz's index is not rank monotone on $G_k$ for sufficiently large $k$ and for all 
\[
	\alpha \in \biggl[\frac1{k+\frac3{4k}}\..\frac1{\rho_k'}\biggr).
\]
In other words,

\begin{thm}
\label{thm:katz}
Let $\rho_k'$ be the spectral radius of the graph $G_k'$ in Figure~\ref{fig:eig}.
For $k\geq 54$, there exists some $\nu_k<\frac1{k+\frac3{4k}}$ such that  
Katz's index is not rank monotone (top violation) on $G_k$ for all $\alpha\in \left(\nu_k\..\frac1{\rho_k'}\right)$.
\end{thm}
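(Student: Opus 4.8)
The plan is to keep the graphs $G_k$ of Figure~\ref{fig:eig} and transport the spectral computations of Section~\ref{sec:eigen} to Katz's index through Theorem~\ref{th:sachsex}, which applies because Katz's index of $G_k$ is the damped spectral ranking $\mathbf1(1-\alpha G_k)^{-1}$ with preference vector $\mathbf1$. First I would note that $\mathbf1^\phi=\mathbf1$, so Theorem~\ref{th:sachsex} gives $\mathbf1(1-\alpha G_k)^{-1}=\bigl(\mathbf1(1-\alpha B_k)^{-1}\bigr)^\phi$; hence the Katz score of a vertex of $G_k$ lying in the fiber over a node $\bar x$ of $B_k$ equals $\bigl[\mathbf1(1-\alpha B_k)^{-1}\bigr]_{\bar x}=\pre_\alpha(\bar x)/\det(1-\alpha B_k)$, and the analogous identity holds for $G_k'$, $B_k'$, $\post_\alpha$ and $\det(1-\alpha B_k')$. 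Since $B_k$ is nonnegative and $(1-\alpha B_k)^{-1}=\sum_{i\geq0}(\alpha B_k)^i$ converges for $0\leq\alpha<1/\rho_k$, the determinant $\det(1-\alpha B_k)$ never vanishes there and, being $1$ at $\alpha=0$, stays positive on $[0,1/\rho_k)$; likewise $\det(1-\alpha B_k')>0$ on $[0,1/\rho_k')$. Consequently, on the common domain of definition $[0,1/\rho_k')$ the order of the Katz scores of vertices $1$ and $4$ in $G_k$ is the sign of $\pre_\alpha(1)-\pre_\alpha(4)$, in $G_k'$ is the sign of $\post_\alpha(1)-\post_\alpha(4)$, and likewise the comparison between vertices $1$ and $0$ in $G_k$ is governed by $\pre_\alpha(1)-\pre_\alpha(0)$.

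With this reduction in place the analysis of Section~\ref{sec:eigen} transfers verbatim, because that analysis was conducted for a generic $\alpha$ in a neighbourhood of $1/\rho_k$ and not merely at $\alpha=1/\rho_k$. Concretely: for $k\geq53$ the Sturm-sequence argument gives $\pre_\alpha(1)>\pre_\alpha(4)$ on all of $\bigl(\tfrac1{k+3/(4k)}\..\tfrac1{k+1/k^2}\bigr]$, its $\post$-counterpart gives $\post_\alpha(1)<\post_\alpha(4)$ there, and for $k\geq54$ the study of $\pre_\alpha(1)-\pre_\alpha(0)$ gives $\pre_\alpha(1)>\pre_\alpha(0)$ on the same interval; moreover all three inequalities also hold at the left endpoint $\alpha=\tfrac1{k+3/(4k)}$, which was checked directly. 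Because $\rho_k'>\rho_k$ and both $\rho_k,\rho_k'$ lie in $\bigl(k+\tfrac1{k^2}\..k+\tfrac3{4k}\bigr]$, we have $\tfrac1{\rho_k'}<\tfrac1{k+1/k^2}$, so the interval $\bigl[\tfrac1{k+3/(4k)}\..\tfrac1{\rho_k'}\bigr)$ is contained in the range where the three sign conditions hold. Thus, for $k\geq54$ and every $\alpha$ in $\bigl[\tfrac1{k+3/(4k)}\..\tfrac1{\rho_k'}\bigr)$, vertex $1$ strictly dominates vertex $0$ and all vertices labelled $4$ in $G_k$, while in $G_k'$ every vertex labelled $4$ strictly dominates vertex $1$: this is a rank-monotonicity violation at the endpoint $1$ of the new edge $0\adj1$, and a top violation since $1$ is the more important of the two endpoints.

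To obtain a threshold $\nu_k$ \emph{strictly} below $\tfrac1{k+3/(4k)}$ I would invoke continuity: the three polynomials in $\alpha$, namely $\pre_\alpha(1)-\pre_\alpha(4)$, $\post_\alpha(4)-\post_\alpha(1)$ and $\pre_\alpha(1)-\pre_\alpha(0)$, are strictly positive at $\alpha=\tfrac1{k+3/(4k)}$, hence positive on a left neighbourhood of that point; choosing $\nu_k$ below all three such neighbourhoods yields the interval $\bigl(\nu_k\..\tfrac1{\rho_k'}\bigr)$ of the statement, and $\nu_k<\tfrac1{k+3/(4k)}\leq\tfrac1{\rho_k}$ keeps $\alpha$ in the domain of Katz's index on $G_k$ while $\alpha<\tfrac1{\rho_k'}$ keeps it in the domain on $G_k'$. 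A routine verification, exactly as in the eigenvector case, that no vertex other than the $k$ ones labelled $4$ changes its position relative to vertex $1$ then shows that vertex $1$ is demoted by $k$ positions. I expect the only delicate point to be the bookkeeping of the interval endpoints---making sure that the window on which Section~\ref{sec:eigen} guarantees the required sign pattern actually straddles $1/\rho_k'$ once the edge is added---which hinges precisely on the two-sided bound $\rho_k,\rho_k'\in\bigl(k+\tfrac1{k^2}\..k+\tfrac3{4k}\bigr]$ and on the strictness of the inequalities at the common endpoint $\tfrac1{k+3/(4k)}$.
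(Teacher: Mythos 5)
Your proposal is correct and takes essentially the same route as the paper, which likewise invokes Theorem~\ref{th:sachsex} to transfer the Sturm-polynomial sign analysis of Section~\ref{sec:eigen} on $\pre_\alpha(1)-\pre_\alpha(4)$, $\post_\alpha(1)-\post_\alpha(4)$ and $\pre_\alpha(1)-\pre_\alpha(0)$ from the base $B_k$ to the Katz scores of $G_k$ over the interval $\bigl[\frac1{k+\frac3{4k}}\..\frac1{\rho_k'}\bigr)$. You merely make explicit two steps the paper leaves implicit, namely the positivity of $\det(1-\alpha B_k)$ (so the adjugate-based comparisons really govern the Katz order) and the continuity argument pushing $\nu_k$ strictly below $\frac1{k+\frac3{4k}}$.
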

Note that the theorem above claims that the violation happens in a left neighborhood of the upper bound of $\alpha$; moreover,
on the left we can get as close as desired to $0$ given a suitable $k$. This is the best possible scenario, in view of Theorem~\ref{th:katzzero}.
Also our considerations about demotion in Section~\ref{sec:eigen} transfer immediately to the present setting.

Further analysis by Sturm polynomials in the interval $\left(\frac1{k+\frac1k}\..\frac1{k+\frac3{4k}}\right]$ shows the following:
\begin{itemize}
  \item the relative importance of node $1$ and node $4$ in $\pre_\alpha(-)$ flips (node $4$ is more important than node $1$ at the beginning of the interval 
and then becomes less important, after some value of $\alpha$, say $\alpha'$); 
  \item the relative importance of node $0$ and node $1$ in $\pre_\alpha(-)$ flips (node $0$ is more important than node $1$ at the beginning of the interval
and then becomes less important, after some value of $\alpha$, say $\alpha''$);
  \item $\pre_\alpha(1)-\pre_\alpha(4)$ always dominates $\pre_\alpha(1)-\pre_\alpha(0)$.
\end{itemize}
The latter observation implies $\alpha'<\alpha''=\nu_k$, in the notation of Theorem~\ref{thm:katz}; since the relative importance of node $1$ and node $4$ remains the same in $\post_\alpha(-)$ 
(node $4$ is always more important than node $1$ in the interval after the addition of the edge), in the interval $\bigl(\alpha'\.. \nu_k\bigr)$ 
we can observe a bottom violation of rank monotonicity.

Moreover, the interval $(\alpha'\..\nu_k)$ gets closer to the upper bound $\frac1{\rho'_k}>\nu_k$ as $k$ gets larger:
\[
\frac{\frac1{\rho'_k} - \alpha'}{\frac1{\rho'_k}}\leq  \frac{\frac1{\rho'_k} - \frac1{k+\frac1k}}{\frac1{\rho'_k}}  = 1-\frac{\rho'_k}{k+\frac1k}\leq1-\frac{k+\frac1{k^2}}{k+\frac1k}\to 0 \quad\text{for $k\to\infty$}.
\]

\begin{thm}
\label{thm:katz2}
For every $k$, there is an interval of values of $\alpha$ contained in $\left(\frac1{k+\frac1k}\..\frac1{k+\frac3{4k}}\right]$
in which Katz's index is not rank monotone (bottom violation). The interval gets arbitrarily close to $\frac1{\rho'_k}$ as $k\to \infty$.
\end{thm}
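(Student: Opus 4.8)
The plan is to package the sign computations sketched just above into an explicit bottom-violation certificate and then to verify the limiting behaviour of the resulting interval. First I would fix $k$ and invoke Theorem~\ref{th:sachsex}: since $\phi:G_k\to B_k$ is a surjective weight-preserving fibration, the Katz vector $\mathbf1(1-\alpha G_k)^{-1}$ is the lifting of $\mathbf1(1-\alpha B_k)^{-1}$, so for every $\alpha<1/\rho_k$ the relative importance of the seven node classes of $B_k$ coincides with that of the corresponding vertex classes of $G_k$; the same holds for $G_k'$ and $B_k'$. Hence it is enough to work with the polynomials $\pre_\alpha(-)$ and $\post_\alpha(-)$ (the column-sums of $\adjugate{1-\alpha B_k}$ and $\adjugate{1-\alpha B_k'}$) already computed in Section~\ref{sec:eigen}.

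Next I would run the Sturm-sequence analysis on $I_k=\left(\frac1{k+\frac1k}\..\frac1{k+\frac3{4k}}\right]$ for the polynomials in $\alpha$ (with coefficients polynomial in $k$) $\pre_\alpha(1)-\pre_\alpha(4)$, $\pre_\alpha(1)-\pre_\alpha(0)$, and $\post_\alpha(1)-\post_\alpha(4)$. Evaluating each Sturm sequence at the two endpoints of $I_k$ produces, for $k$ large, rational functions of $k$ of definite sign, so that the root counts $V(a)-V(b)$ are: exactly one root $\alpha'$ of $\pre_\alpha(1)-\pre_\alpha(4)$ in $I_k$, with node $4$ more important than node $1$ to its left and node $1$ more important than node $4$ to its right; exactly one root $\alpha''$ of $\pre_\alpha(1)-\pre_\alpha(0)$ in $I_k$, with node $0$ more important than node $1$ to its left; and no root of $\post_\alpha(1)-\post_\alpha(4)$ in $I_k$, node $4$ staying more important than node $1$ throughout. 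Applying the same method to the difference $\bigl(\pre_\alpha(1)-\pre_\alpha(4)\bigr)-\bigl(\pre_\alpha(1)-\pre_\alpha(0)\bigr)=\pre_\alpha(0)-\pre_\alpha(4)$ shows it has constant sign on $I_k$, which forces $\alpha'<\alpha''$; put $\nu_k=\alpha''$.

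The conclusion is then read off on the nonempty subinterval $(\alpha'\..\nu_k)\subseteq I_k$. For $\alpha$ there, in $G_k$ vertex $1$ strictly dominates the vertices labeled with $4$ (we are to the right of $\alpha'$) while vertex $0$ strictly dominates vertex $1$ (we are to the left of $\alpha''$); after the edge $0\adj1$ is added, in $G_k'$ the vertices labeled with $4$ strictly dominate vertex $1$. So the vertices labeled with $4$, strictly dominated by vertex $1$ before, strictly dominate it afterwards: vertex $1$ --- an endpoint of the new edge --- violates rank monotonicity, and since the other endpoint, vertex $0$, is more important than vertex $1$ in $G_k$, the violation is a bottom one; the demotion count of Section~\ref{sec:eigen} carries over. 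Every such $\alpha$ satisfies $\alpha<\nu_k<\frac1{\rho_k'}$ (the last inequality because the violation interval $\bigl(\nu_k\..\frac1{\rho_k'}\bigr)$ of Theorem~\ref{thm:katz} is nonempty), so Katz's index is well defined on $G_k'$. Finally, for the limiting statement, use $\alpha'>\frac1{k+\frac1k}$ (the open left endpoint of $I_k$) and $\rho_k'>\rho_k>k+\frac1{k^2}$ to get
\[
\frac{\frac1{\rho_k'}-\alpha'}{\frac1{\rho_k'}}=1-\alpha'\rho_k'\le 1-\frac{\rho_k'}{k+\frac1k}\le 1-\frac{k+\frac1{k^2}}{k+\frac1k}\to 0\quad\text{for $k\to\infty$,}
\]
so $(\alpha'\..\nu_k)$, hence its right endpoint, approaches $\frac1{\rho_k'}$.

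The main obstacle is the usual uniform-in-$k$ control of the Sturm sequences: one has to check that no remainder in the Euclidean chain of any of the four polynomials vanishes identically (squarefreeness) and that the sign of the leading coefficient in $k$ of the numerator and denominator of each endpoint evaluation is as claimed for all $k$ above an explicit threshold, so that the variation counts are exactly $1$, $1$ and $0$ and $\pre_\alpha(0)-\pre_\alpha(4)$ is indeed sign-constant. For the finitely many values of $k$ below the threshold (and above the smallest $k$ for which $G_k$ is a genuine graph) I would, exactly as for eigenvector centrality, confirm numerically that a bottom-violation subinterval persists, which yields the statement for every $k$. All of these verifications are routine in the accompanying Sage worksheets.
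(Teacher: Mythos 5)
Your proposal is correct and follows essentially the same route as the paper: a Sturm-sequence root count on $\left(\frac1{k+\frac1k}\..\frac1{k+\frac3{4k}}\right]$ locating the single crossings $\alpha'$ of $\pre_\alpha(1)-\pre_\alpha(4)$ and $\alpha''=\nu_k$ of $\pre_\alpha(1)-\pre_\alpha(0)$, the sign-constancy of $\pre_\alpha(0)-\pre_\alpha(4)$ (the paper's ``dominates'' condition) forcing $\alpha'<\alpha''$, the absence of roots of $\post_\alpha(1)-\post_\alpha(4)$, and the identical estimate $1-\alpha'\rho'_k\leq 1-\frac{k+\frac1{k^2}}{k+\frac1k}\to 0$ for the limiting claim. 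Your explicit attention to uniform-in-$k$ sign control and to small values of $k$ is a reasonable tightening of details the paper leaves implicit, not a different argument.
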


As a final consideration, there is another range of validity of Theorem~\ref{thm:katz}: if we further analyze with Sturm polynomials the relative
importance of node $1$ and node $4$ in the interval $\left(\frac1{k+\frac2k}\..\frac1{k+\frac3{4k}}\right]$, we find two sign changes
in $\pre_\alpha(1)- \pre_\alpha(4)$, two sign changes
in $\pre_\alpha(1)- \pre_\alpha(0)$  and zero sign changes in $\post_\alpha(1)- \post_\alpha(4)$: thus, there is an interval
comprising $\frac1{k+\frac2k}$ in which the violation of rank monotonicity happens again. Also in this interval 
$\pre_\alpha(1)-\pre_\alpha(4)$ always dominates $\pre_\alpha(1)-\pre_\alpha(0)$, hence, we have both top violations and bottom violations; it
is also immediate to show demotion.
Figure~\ref{fig:katz} resumes graphically the results proved in this section.

\begin{figure}
\centering
\includegraphics{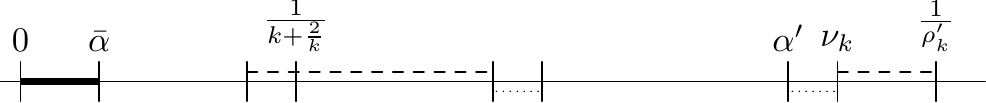}
\caption{\label{fig:katz}A graphical resume of the results about Katz's index proved in Section~\ref{sec:katz}. 
Dashed lines (dotted lines, resp.) represent intervals
of values of $\alpha$ in which we proved top (bottom, resp.) violations of rank monotonicity (Theorems~\ref{thm:katz} and~\ref{thm:katz2}).
The thick interval represents a region where rank monotonicity is guaranteed (Theorem~\ref{th:katzzero}).}
\end{figure}

\section{PageRank}

PageRank~\cite{PBMPCR} can be defined as
\[
(1-\alpha)\bm v\sum_{i=0}^\infty \alpha^i\bar G^i = (1-\alpha) \bm v(1-\alpha \bar G)^{-1},
\]
where $\alpha\in[0\..1)$ is the damping factor, 
$\bm v$ is a non-negative preference vector with unit $\ell_1$-norm, 
and $\bar G$ is the row-normalized version\footnote{Here we are assuming that $G$ has no \emph{dangling nodes} (i.e., nodes with outdegree $0$).
If dangling nodes are present, you can still use this definition (null rows are left untouched in $\bar G$), but then to obtain PageRank you 
need to normalize the resulting vector~\cite{BSVPFD,DCGRFPCSLS}. So all our discussion can also be applied
to graphs with dangling nodes, up to $\ell_1$-normalization.} of $G$; that is,
$\bar G$ is just the (adjacency matrix of the) weighted version of $G$ defined by letting $w(a)=1/\sum_{x\in N_G}G_{s_G(a)\, x}$.
Hence, if you have a weighted graph $B$,
a weight-preserving fibration $\phi: \bar G \to B$ that is surjective on the nodes, and a vector $\bm u$ of size $n_B$ such that 
$\bm u^\phi$ has unit $\ell_1$-norm, you can deduce from Theorem~\ref{th:sachsex} that
\begin{equation}
\label{eqn:resumepr}
 (1-\alpha) \bm u^\phi (1-\alpha \bar G)^{-1}=\left(\bm (1-\alpha) \bm u (1-\alpha B)^{-1}\right)^\phi.
\end{equation}

On the left-hand side you have the actual PageRank of $G$ for a preference vector that is fiberwise constant;
on the right-hand side you have a damped spectral ranking of $B$.
Note that $B$ is not necessarily row-stochastic, and $\bm u$ has not unit $\ell_1$-norm, 
so technically the right-hand side of the equation in Theorem~\ref{th:sachsex} is
not PageRank anymore.

\smallskip
We first observe that
\begin{thm}
\label{th:prlim}
Given an undirected graph $G$, there is a value of $\alpha$ for which PageRank is strictly rank monotone on $G$.
The same is true for score monotonicity, except when $G$ is formed by a star graph and one or more additional isolated vertices.
\end{thm}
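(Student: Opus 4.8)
The plan is to work in the two regimes of the damping factor in which PageRank collapses onto an already-understood centrality. For $\alpha$ near $0$ the resolvent expansion gives
\[
(1-\alpha)\bm v(1-\alpha\bar G)^{-1} = \bm v + \alpha\bigl(\bm v\bar G - \bm v\bigr) + O(\alpha^2),
\]
so that, after $\ell_1$-normalization, the score vector is $\bm v$ plus an $O(\alpha)$ correction; this mirrors the treatment of Katz's index in Theorem~\ref{th:katzzero}, except that here the leading term is the preference vector, which is untouched by the insertion of an edge and carries no ranking information of its own — hence for small $\alpha$ every comparison is settled by the correction term $\bm v\bar G$ and, when that also ties, by successively higher powers of $\bar G$. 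For $\alpha$ near $1$, a standard Abelian argument shows that $(1-\alpha)\bm v(1-\alpha\bar G)^{-1}$ tends, on each connected component of $G$ containing an edge, to a positive multiple of that component's degree distribution (so, to Seeley's index $d(\cdot)/2m$ when $G$ is connected) and to $0$ on isolated vertices; thus for $\alpha$ near $1$ the ranking of $G$, and of each one-edge augmentation $G'$, coincides with or refines the corresponding Seeley ranking. Since there are only finitely many vertex pairs and finitely many graphs $G'$, a single $\alpha$ in either regime governs all the comparisons we need at once.

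First I would record the effect of inserting $x\adj y$: on the row-stochastic matrix this is a perturbation supported on rows $x$ and $y$, and a direct computation shows that, in either limiting regime, the insertion moves the ranking only in favour of the two endpoints — in the regime that resolves the tie, each endpoint overtakes, or at least keeps pace with, every other vertex — and, having chosen the regime appropriately, strictly raises both endpoint scores, with the strictness governed by elementary data (whether the preference weight at $x$ or $y$ vanishes, whether $d(x)<m$, whether $x$ and $y$ share a component). Strict rank monotonicity then follows exactly as for Seeley's index (Theorem~\ref{thm:seeleyrank}): for any $z\ne x,y$ with $c_G(z)\leq c_G(x)$, the endpoint $x$ strictly overtakes $z$ after the augmentation, so $c_{G'}(z)<c_{G'}(x)$, and symmetrically for $y$. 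For score monotonicity the same computation gives the strict increase at both endpoints, and — exactly as in the proof of Theorem~\ref{thm:seeleyscore} — the one obstruction is the identity $d(x)=m$ with the other endpoint isolated, which forces $G$ to be a star together with one or more isolated vertices; in that case one checks directly that no value of $\alpha$ makes the insertion strictly beneficial at both endpoints.

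The hard part is the tie-breaking. When two vertices carry the same PageRank for every $\alpha$ — equivalently, when the relevant entries of $\bm v(1-\alpha\bar G)^{-1}$ agree identically, or their limits coincide — the leading-order comparison is silent and one must descend to the first power of $\bar G$ at which the two entries separate. Making this precise, uniformly over all vertex pairs and all augmentations, is a lexicographic refinement argument on the sequence $\bm v,\bm v\bar G,\bm v\bar G^2,\dots$, very much in the spirit of the Sturm-polynomial analysis of Section~\ref{sec:eigen}: one must verify that at the first separating power the insertion of $x\adj y$ never reverses the weak inequality that caused the tie. Everything else — the two limiting computations and the finite check of the star-plus-isolated exception — is routine.
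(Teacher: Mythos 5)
Your $\alpha\to1$ half is exactly the paper's proof: as $\alpha\to1$ PageRank tends to Seeley's index, Seeley's index is strictly rank monotone (Theorem~\ref{thm:seeleyrank}) and score monotone outside the star-plus-isolated-vertices case (Theorem~\ref{thm:seeleyscore}), and since there are only finitely many pairs $x,y$ and finitely many vertices $z$, a single $\alpha$ sufficiently close to $1$ handles every comparison at once. The $\alpha\to0$ regime is a dead end you do not need: the theorem asks only for \emph{one} value of $\alpha$, and near $0$ the first-order correction $\alpha\bm v\bar G$ has entries that are sums of reciprocal neighbor degrees, not degrees, so nothing forces it to respond monotonically to the insertion of $x\adj y$ (indeed the dual statement for Katz, Theorem~\ref{th:katzzero}, works near $0$ precisely because there is no stochastic normalization). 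Drop that half.

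The genuine gap is that you declare the tie-breaking ``the hard part'' and leave it undone --- and in fact no tie-breaking is needed, so the lexicographic descent along $\bm v,\bm v\bar G,\bm v\bar G^2,\dots$ is solving a problem that does not arise. Strict rank monotonicity requires: for $z\ne x,y$ with $c_G(z)\le c_G(x)$ (PageRank scores at the chosen $\alpha$), conclude $c_{G'}(z)<c_{G'}(x)$. Take $\alpha$ close enough to $1$ that every \emph{strict} degree inequality among the finitely many relevant vertex pairs, in $G$ and in every one-edge augmentation $G'$, is reproduced strictly by PageRank. Then the hypothesis $c_G(z)\le c_G(x)$ forces $d_G(z)\le d_G(x)$ by contraposition (if $d_G(z)>d_G(x)$ the hypothesis would fail for such $\alpha$, making the implication vacuous), and the conclusion follows because $d_{G'}(z)=d_G(z)\le d_G(x)<d_G(x)+1=d_{G'}(x)$ is already a \emph{strict} inequality between the $\alpha\to1$ limits for $G'$, hence persists for $\alpha$ near $1$. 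Every inequality you are required to output is strict at the level of degrees, precisely because the new edge strictly increases the degree of both endpoints; a PageRank tie at finite $\alpha$ between $z$ and $x$ never has to be resolved, only mapped to a weak degree inequality. Once the unneeded tie-breaking and the $\alpha\to0$ regime are deleted, what remains is the paper's short argument.
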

\begin{proof}
We know that for $\alpha\to 1$, PageRank tends to Seeley's index~\cite{BSVPFDF}. Since Seeley's index is strictly rank monotone (Theorem~\ref{thm:seeleyrank}), for each non-adjacent pair 
of vertices $x$ and $y$ there is a value $\alpha_{xy}$ such that for $\alpha\geq\alpha_{xy}$ adding the edge $x\adj y$ is strictly
rank monotone. The proof is completed by taking $\alpha$ larger than all $\alpha_{xy}$'s. 
The result for score monotonicity is similar, using Theorem~\ref{thm:seeleyscore}.
\end{proof}
It is interesting to remark that this result is dual to Theorem~\ref{th:katzzero}: Katz's index is approximated by
degree for values of the damping factor close to the lower bound (zero), whereas PageRank is approximated by degree for values of
the damping factor close to the upper bound (one).

On the other hand, we will now show that 
\emph{for every possible value of the damping factor $\alpha$} there is a graph on which PageRank is neither
rank nor score monotone.
Our proof strategy will be identical to the one we used for Katz's index, except that now we expect our example to
satisfy rank monotonicity when $\alpha$ is close
to its upper bound, instead of its lower bound, because of Theorem~\ref{th:prlim}.

\begin{figure}
\centering
\begin{tabular}{cc}
\raisebox{.5cm}{$G_k$\qquad}&\includegraphics{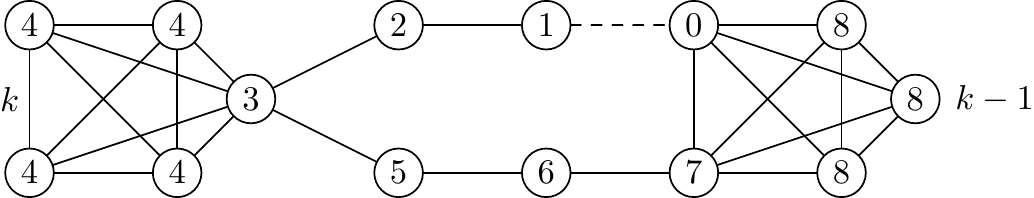}\\
\raisebox{.5cm}{$B_k$\qquad}&\includegraphics{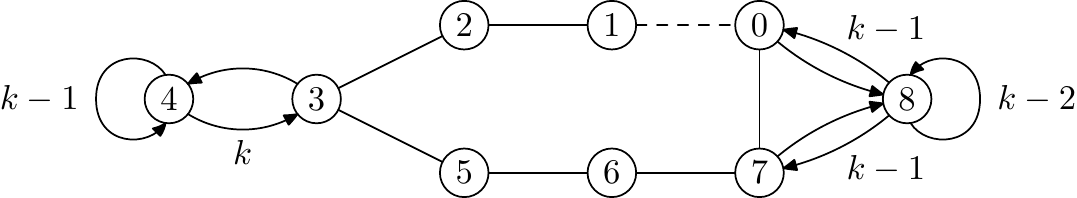}
\end{tabular}
\[
B_k = \left(\begin{matrix}
   0 & 0 & 0 & 0 & 0 & 0 & 0 & \frac1{k} & \frac1{k} \\
    0 & 0 & 1 & 0 & 0 & 0 & 0 & 0 & 0 \\
    0 & \frac12 & 0 & \frac12 & 0 & 0 & 0 & 0 & 0 \\
    0 & 0 & \frac1{k + 2} & 0 & \frac1{k + 2} & \frac1{k + 2} & 0 & 0 & 0 \\
    0 & 0 & 0 & \frac{k}{k} & \frac{k - 1}{k} & 0 & 0 & 0 & 0 \\
    0 & 0 & 0 & \frac12 & 0 & 0 & \frac12 & 0 & 0 \\
    0 & 0 & 0 & 0 & 0 & \frac12 & 0 & \frac12 & 0 \\
    \frac1{k + 1} & 0 & 0 & 0 & 0 & 0 & \frac1{k + 1} & 0 & \frac1{k + 1} \\
    \frac{k - 1}{k} & 0 & 0 & 0 & 0 & 0 & 0 & \frac{k - 1}{k} & \frac{k - 2}{k}
\end{matrix}\right)
\]
\[
B'_k = \left(\begin{matrix}
   0 & \frac1{k + 1} & 0 & 0 & 0 & 0 & 0 & \frac1{k + 1} & \frac1{k + 1} \\
    \frac12 & 0 & \frac12 & 0 & 0 & 0 & 0 & 0 & 0 \\
    0 & \frac12 & 0 & \frac12 & 0 & 0 & 0 & 0 & 0 \\
    0 & 0 & \frac1{k + 2} & 0 & \frac1{k + 2} & \frac1{k + 2} & 0 & 0 & 0 \\
    0 & 0 & 0 & \frac{k}{k} & \frac{k - 1}{k} & 0 & 0 & 0 & 0 \\
    0 & 0 & 0 & \frac12 & 0 & 0 & \frac12 & 0 & 0 \\
    0 & 0 & 0 & 0 & 0 & \frac12 & 0 & \frac12 & 0 \\
    \frac1{k + 1} & 0 & 0 & 0 & 0 & 0 & \frac1{k + 1} & 0 & \frac1{k + 1} \\
    \frac{k - 1}{k} & 0 & 0 & 0 & 0 & 0 & 0 & \frac{k - 1}{k} & \frac{k - 2}{k}
\end{matrix}\right)
\]
\caption{\label{fig:pr}A parametric counterexample graph for PageRank: when adding the edge $0\adj 1$,
vertex $1$ violates score and rank monotonicity (bottom violation). The $k$ vertices labeled with $4$
form a $(k+1)$-clique with vertex $3$, and the $k-1$ vertices labeled with $8$ form a $(k+1)$-clique
with vertices $0$ and $7$. Arc
labels represent multiplicity; weights are induced by the uniform distribution on the upper graph.
The matrices displayed are the adjacency matrix of $B_k$ and $B'_k$; differently from Figure~\ref{fig:eig}, we show them both explicitly to
highlight how the addition of the new edge influences row normalization.}
\end{figure}

In Figure~\ref{fig:pr} we show a family of total graphs $G_k$ depending
on an integer parameter $k$, and an associated family of bases $B_k$, with fibrations defined on the nodes following the node labels,
and on the arcs in the only possible way.\footnote{Note that in the conference version of this paper~\cite{BFVSRMUN}
nodes are numbered differently, and the denominators of the second row of the adjacency matrix
displayed therein are $k-1$, mistakenly, instead of $k+1$.} Weights are defined by
normalizing the adjacency matrix of $G_k$, and then using the fibration to transfer the weights on the arcs $B_k$
(it is easy to see that no conflict arises when multiple arcs of $G_k$ are mapped to the same arc of $B_k$).
As usual, $G'_k$ and $B'_k$ are the same graphs with the additional edge $0\adj 1$.

The basic intuition behind the graphs $G_k$ is that when you connect a high-degree vertex $x$ with a low-degree vertex $y$, 
$y$ will pass to $x$ a much larger fraction of its score than in the opposite direction. This phenomenon is caused by the stochastic normalization of the
adjacency matrix: the arc from $x$ to $y$ will have a low coefficient, due to the high degree of $x$, whereas the arc from $y$ to $x$
will have a high coefficient, due to the low degree of $y$.

While $G_k$ has $2k+6$ vertices, $B_k$ has $9$ vertices, independently of $k$, and
thus its PageRank can be computed analytically as rational functions of $\alpha$ whose coefficients are rational functions
in $k$ (since the number of arcs of each $B_k$ is different).

We thus define, for every node $x$,
\begin{align*}
 \pre_\alpha(x)&= \left[(1-\alpha)\mathbf1\bigl(1 -\alpha  B_k\bigr)^{-1}\right]_x\\
 \post_\alpha(x)&=  \left[(1-\alpha)\mathbf1\bigl(1 -\alpha  B'_k\bigr)^{-1}\right]_x.\\
\end{align*}
Note when discussing score monotonicity we cannot use the adjugate matrix 
to simplify our computations, as we did in Section~\ref{sec:eigen}, but we can use without loss of generality
an arbitrary constant vector as preference vector. When discussing rank monotonicity,
however, we will switch silently to the adjugate (because the denominator cannot change its sign anywhere in $[0\..1)$).

For example,
\[
\post_\alpha(1)= \frac{\frac{2 k^{2} - 6 k + 4}{k^{2} - 6 k + 4} \alpha^{5} + 
\frac{-14 k^{2} + 12 k + 4}{k^{2} - 6 k + 4} \alpha^{4} +\cdots+ 
\frac{6 k^{4} + 6 k^{3} - 24 k^{2} - 24 k}{k^{2} - 6 k + 4} \alpha + 
\frac{-4 k^{4} - 12 k^{3} - 8 k^{2}}{k^{2} - 6 k + 4}}
{\alpha^{5} + \frac{-2 k^{3} - 10 k^{2} + 12 k + 4}{k^{2} - 6 k + 4} \alpha^{4} + \cdots + 
\frac{4 k^{4} + 4 k^{3} - 20 k^{2} - 24 k}{k^{2} - 6 k + 4} \alpha + 
\frac{-4 k^{4} - 12 k^{3} - 8 k^{2}}{k^{2} - 6 k + 4}},
\]
where we omitted part of the terms for lack of space.
Once again, in the adjacency matrix of $B_k$ just four rows contain $k$: as a consequence, the degree in $k$
of numerators and denominators of coefficients of the rational functions in $\alpha$ is at most four.

\subsection{Score monotonicity}

We start by considering node $1$: evaluating $\post_\alpha(1)-\pre_\alpha(1)$ in
$\alpha=2/3$ we obtain a negative value for $k\geq 11$, showing there is
a value of $\alpha$ for which node $1$ violates score monotonicity.
Then, we use again Sturm polynomials to show that
for $k\geq 13$ the numerator of $\post_\alpha(1)-\pre_\alpha(1)$ never changes its sign 
in $\bigl(a_k\..b_k\bigr]$, where
\[
a_k = \frac23 - \frac{2k}{3k+ 100} < \frac23 < \frac23+\frac{k}{3k+100} = b_k,
\]
while the denominator of $\post_\alpha(1)-\pre_\alpha(1)$ cannot have zeros in $[0\..1)$.
The interval $\bigl(a_k\..b_k\bigr]$ approaches $(0\..1]$ as $k$ grows, 
so we conclude that 
the interval of values of $\alpha$ for which the score of node $1$ decreases reaches the whole unit interval as $k$ grows.

Finally, by studying (as in the case of Katz's index) the polynomial $\pre_\alpha(1)-\pre_\alpha(0)$ it is easy to see that in our example
node $0$ is always more important than node $1$ as long as $k\geq 1$.

\begin{thm}
\label{th:prscore}
For every value of $\alpha\in(0\..1)$, for sufficiently large $k$ PageRank with damping factor $\alpha$ is not score monotone (bottom violation) on the
graphs $G_k$ of Figure~\ref{fig:pr}.
\end{thm}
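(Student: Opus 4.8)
The plan is to run, for PageRank, exactly the programme already carried out for Katz's index, transported through the fibration identity~(\ref{eqn:resumepr}). First I would invoke Theorem~\ref{th:sachsex} to replace the computation of PageRank on the variable-size graphs $G_k$ and $G'_k$ by a computation on the fixed-size bases $B_k$ and $B'_k$ of Figure~\ref{fig:pr}: the uniform preference vector on $G_k$ is the lift of a fiberwise-constant vector, so its PageRank equals the lift of $(1-\alpha)\mathbf1(1-\alpha B_k)^{-1}$, and likewise for $G'_k$ with $B'_k$. This reduces everything to the rational functions $\pre_\alpha(x)$ and $\post_\alpha(x)$ defined in the excerpt, whose coefficients are rational functions in $k$ of degree at most four (only four rows of $B_k$ carry $k$).

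Next I would focus on vertex $1$, the intended losing endpoint, and study the single rational function $\delta_k(\alpha)=\post_\alpha(1)-\pre_\alpha(1)$. Evaluating at the fixed point $\alpha=2/3$ gives, after simplification, a rational function of $k$ that is negative for $k\geq 11$; this already exhibits, for each such $k$, one damping value witnessing a score violation. To upgrade ``some $\alpha$'' to ``every $\alpha$'', I would let the witnessing interval grow with $k$: set $a_k=\frac23-\frac{2k}{3k+100}$ and $b_k=\frac23+\frac{k}{3k+100}$, so $0<a_k<\frac23<b_k<1$, with $a_k\downarrow 0$ and $b_k\uparrow 1$ monotonically as $k\to\infty$, and prove that the \emph{numerator} of $\delta_k(\alpha)$ has constant sign on $(a_k\..b_k]$. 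Since PageRank is finite on $[0\..1)$, the denominator of $\delta_k$ never vanishes there, so the numerator controls the sign of $\delta_k$; combined with $\delta_k(2/3)<0$ and $2/3\in(a_k\..b_k]$, constant sign forces $\delta_k(\alpha)<0$ throughout the interval, i.e.\ vertex $1$ strictly loses score for every $\alpha\in(a_k\..b_k]$. Then, given any target $\alpha_0\in(0\..1)$, the monotone convergence of the endpoints yields a $K$ with $\alpha_0\in(a_k\..b_k]$ for all $k\geq K$, so $G_k$ is the desired counterexample for every $k\geq\max(K,13)$. Finally, inspecting $\pre_\alpha(1)-\pre_\alpha(0)$ (essentially the same computation done for Katz's index) shows vertex $0$ dominates vertex $1$ for all $k\geq 1$, making the violation a bottom violation.

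The constant-sign claim on $(a_k\..b_k]$ is the only real work, and it is where Sturm polynomials enter. I would form the Sturm sequence of the numerator of $\delta_k(\alpha)$, viewed as a polynomial in $\alpha$ (squarefree, as for the other polynomials in the paper), evaluate each member at $\alpha=a_k$ and at $\alpha=b_k$ to obtain pairs of rational functions in $k$, and read off their signs for $k\to\infty$ from the leading coefficients of numerator and denominator; for $k$ large the sign-variation counts $V(a_k)$ and $V(b_k)$ then stabilize, and a finite check (delegated to Sage) gives $V(a_k)=V(b_k)$ for $k\geq 13$, hence no root of the numerator in the interval. I expect the main obstacle to be precisely this bookkeeping: the Euclidean remainders in the Sturm recursion inflate the degree-$\leq 4$ rational coefficients in $k$ considerably, and one must be scrupulous that every leading-coefficient sign is determined correctly and that the threshold $k\geq 13$ is genuine rather than an artefact of truncating an asymptotic expansion too early.
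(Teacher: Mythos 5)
Your proposal is correct and follows essentially the same route as the paper: reduction to the fixed-size bases $B_k$, $B'_k$ via the fibration identity, a negativity check of $\post_\alpha(1)-\pre_\alpha(1)$ at $\alpha=2/3$, a Sturm-sequence argument showing the numerator keeps constant sign on the growing interval $(a_k\..b_k]$, and the study of $\pre_\alpha(1)-\pre_\alpha(0)$ to certify a bottom violation. The only additions are explicit justifications (nonvanishing of the denominator on $[0\..1)$, monotone convergence of the endpoints) that the paper leaves implicit.
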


It is also interesting to count the sign changes of $\post_\alpha(1)-\pre_\alpha(1)$ in $\bigl(0\..a_k\bigr]$ (one) and $\bigl(b_k\..1\bigr)$ (one), as they describe the behavior
of the score change for limiting values: initially, the score increases;
then, it starts to decrease somewhere before $a_k$ and stops decreasing somewhere after $b_k$, as expected from Theorem~\ref{th:prlim}.

\subsection{Rank monotonicity}

We now use the same example to prove the lack of rank monotonicity. In this case, we study in a similar way
$\pre_\alpha(1)-\pre_\alpha(5)$, which is positive in $\alpha=2/3$ if $k\geq 13$.
To extend our results about rank monotonicity to every $\alpha$, we use again Sturm polynomials to show that
the numerator of $\pre_\alpha(1)-\pre_\alpha(5)$ never changes its sign in $\bigl(a_k\..b_k\bigr]$ for $k\geq 14$. 

Again, it is interesting to count the sign changes of $p(\alpha)$ in $(0\..a_k]$ (one) and $(b_k\..1)$ (one):
initially, node $1$ has a smaller PageRank than node $5$; then, somewhere before $a_k$,
node $1$ starts having a larger PageRank than $5$; somewhere after $b_k$, we return to the initial condition, as expected from Theorem~\ref{th:prlim}.

Finally, we study $\post_\alpha(1)-\post_\alpha(5)$ which, is negative in $\alpha=2/3$
and again has no sign changes in  $\bigl(a_k\..b_k\bigr]$ for $k\geq 5$. More precisely, we study $p(\alpha)$, where
$\post_\alpha(1)-\post_\alpha(5) = (1-\alpha)^2p(\alpha)$, as $\post_\alpha(1)-\post_\alpha(5)$ is not squarefree, but $p(\alpha)$ is. 

In this case, there are two sign changes in $(0\..a_k]$ and no sign change in $(b_k\..1)$, so
initially, node $1$ is less important than node $5$; then, in an interval of values before $a_k$ it is more important;
then, it starts to be again less important before $a_k$; and it becomes as important as node $5$
only in the limit for $\alpha\to 1$.

\begin{thm}
\label{th:prrank}
For every value of $\alpha\in(0\..1)$, for sufficiently large $k$ PageRank with damping factor $\alpha$ is not rank monotone (bottom violation)
on the graphs $G_k$ of Figure~\ref{fig:pr}.
\end{thm}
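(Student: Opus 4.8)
The plan is to assemble into a single $k$-selection argument the facts already established in this section, exactly as was done for Katz's index in Theorem~\ref{thm:katz2}. Fix an arbitrary $\alpha\in(0\..1)$. Since
\[
a_k=\frac23-\frac{2k}{3k+100}\longrightarrow 0\qquad\text{and}\qquad b_k=\frac23+\frac{k}{3k+100}\longrightarrow 1\quad\text{as }k\to\infty,
\]
there is a $k_0$ with $\alpha\in(a_k\..b_k)$ for every $k\ge k_0$; enlarging $k_0$ if necessary we may also assume $k_0\ge 14$, which covers every numerical threshold used above ($k\ge13$ for the sign of $\pre_\alpha(1)-\pre_\alpha(5)$ at $\alpha=2/3$, $k\ge14$ for the absence of sign changes of $\pre_\alpha(1)-\pre_\alpha(5)$ on $(a_k\..b_k]$, $k\ge5$ for the absence of sign changes of the squarefree factor $p(\alpha)$ of $\post_\alpha(1)-\post_\alpha(5)$ on the same interval, and $k\ge1$ for node $0$ dominating node $1$). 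Fix any such $k$; we claim $G_k$ already witnesses the theorem at this $\alpha$.

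Before adding the edge, $\pre_\alpha(1)-\pre_\alpha(5)$ is positive at $\alpha=2/3\in(a_k\..b_k)$ and has no sign change on $(a_k\..b_k]$ (the Sturm computation already carried out), hence it is positive on the whole interval; since the relevant denominator keeps a constant sign on $[0\..1)$, this means vertex $1$ strictly dominates vertex $5$ in $G_k$. After adding $0\adj1$, writing $\post_\alpha(1)-\post_\alpha(5)=(1-\alpha)^2p(\alpha)$ with $p$ squarefree, we have $p(2/3)<0$ (the left-hand side is negative there and $(1-2/3)^2>0$) and $p$ has no sign change on $(a_k\..b_k]$, so $p<0$ and hence $\post_\alpha(1)<\post_\alpha(5)$ there, i.e.\ vertex $5$ strictly dominates vertex $1$ in $G'_k$. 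Thus vertex $5$ is a witness $z\ne0,1$ with $c_{G_k}(z)<c_{G_k}(1)$ but not $c_{G'_k}(z)<c_{G'_k}(1)$, so rank monotonicity fails at the endpoint $1$ of the new edge; and since $\pre_\alpha(0)>\pre_\alpha(1)$ for all $k\ge1$, the hurt endpoint is the less important of the two, making this a bottom violation. As in the rest of the paper, running the same Sturm analysis on the remaining fibers of $B_k$ confirms that only the fiber of vertex $5$ overtakes vertex $1$, which gives demotion of vertex $1$ by at least one position; the further sign-change counts of these polynomials on $(0\..a_k]$ and on $(b_k\..1)$ merely record the return to monotone behaviour near $\alpha\to1$ predicted by Theorem~\ref{th:prlim}.

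The conceptual content is therefore just this bookkeeping; the real labour is the symbolic step already performed: writing down $B_k$ and $B'_k$, computing $\mathbf1(1-\alpha B_k)^{-1}$ and, for the rank comparisons, the column sums of $\adjugate{1-\alpha B_k}$ as rational functions of $\alpha$ with coefficients rational in $k$, forming the differences $\pre_\alpha(1)-\pre_\alpha(5)$, $\pre_\alpha(1)-\pre_\alpha(0)$ and $\post_\alpha(1)-\post_\alpha(5)$, extracting their Sturm sequences in $\alpha$, evaluating at $a_k$ and $b_k$, and letting $k\to\infty$ to read off the number of sign variations. The one genuine obstacle — and the reason for the somewhat ad hoc endpoints $a_k,b_k$ with their $3k+100$ denominators — is to make $(a_k\..b_k]$ at once wide enough that it eventually engulfs any prescribed $\alpha\in(0\..1)$ and narrow enough that none of the relevant polynomials changes sign inside it; everything else is the kind of rational-function manipulation Sage handles.
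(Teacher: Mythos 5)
Your proposal is correct and follows essentially the same route as the paper: fix $\alpha$, observe that $\bigl(a_k\..b_k\bigr]$ eventually engulfs it, and combine the sign of $\pre_\alpha(1)-\pre_\alpha(5)$ and of the squarefree factor of $\post_\alpha(1)-\post_\alpha(5)$ at $\alpha=2/3$ with the Sturm-certified absence of sign changes on that interval, plus $\pre_\alpha(0)>\pre_\alpha(1)$ for the bottom-violation claim. The only cosmetic difference is in the demotion remark, where the paper also notes that vertex $6$ ends up tied with vertex $1$ by automorphism; this does not affect the theorem itself.
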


Recall that in~\cite{BLVRMCM} PageRank was proven to be both score and
(strictly) rank monotone for all \emph{directed} graphs and all
$\alpha\in[0\..1)$, given that the preference vector is positive: 
comparing those results with Theorems~\ref{th:prscore} and~\ref{th:prrank}, we
see once more that in the undirected case the behavior is radically different.

For sufficiently large $k$, almost all nodes are more important (i.e., have larger PageRank score) than node $1$
both before and after edge addition, with the only exception of nodes $5$ and $6$: as we said, node $5$ is less important than node $1$ before
but more important after the edge addition; whereas node $6$ is also less important than node $1$ before, and becomes as important as node $1$ 
after the edge addition (as node $6$ and node $1$ 
become equivalent modulo an automorphism). As a result, node $1$ is demoted.

Finally, we provide in Figure~\ref{fig:pr-new} a counterexample in which the more important node violates rank monotonicity.
In this case, the intuition is that we connect two nodes with the same degree but different scores.
As in the previous case, the counterexample works for any chosen $\alpha$, up to an appropriate choice of the parameter $k$.
The proof follows the same line of attack, and detailed computations can be found in the Sage worksheets. The main
difference is that the relevant interval $\bigl(a_k\..b_k\bigr]$ is now
\[
a_k = \frac23-\frac{2\sqrt k}{3\sqrt k+10}<\alpha\leq \frac23 + \frac k{3k+1} = b_k.
\]

\begin{thm}
\label{th:prrank-new}
For every value of $\alpha\in(0\..1)$, for sufficiently large $k$ PageRank with damping factor $\alpha$ is neither score nor rank monotone (top violation)
on the graphs $G_k$ of Figure~\ref{fig:pr-new}.
\end{thm}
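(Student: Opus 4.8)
The plan is to mirror, essentially verbatim, the strategy used for Theorems~\ref{th:prscore} and~\ref{th:prrank}, substituting the graph family of Figure~\ref{fig:pr-new} for that of Figure~\ref{fig:pr} and the interval $\bigl(a_k\..b_k\bigr]$ with the one displayed just before the theorem. First I would exhibit the fibration $\phi:\bar G_k\to B_k$ (and its analogue after adding $0\adj1$) implicit in the node labels of Figure~\ref{fig:pr-new}, check that with the weights induced by uniform row-normalization of $G_k$ it is weight-preserving and surjective on the nodes, and verify that no weight conflict arises when several arcs of $G_k$ collapse onto a single arc of $B_k$. Then, by Theorem~\ref{th:sachsex} and equation~(\ref{eqn:resumepr}), the PageRank of $G_k$ for a fiberwise constant preference vector is recovered from the damped spectral ranking $(1-\alpha)\mathbf1(1-\alpha B_k)^{-1}$ of the fixed-size matrix $B_k$, whose entries are rational in $k$; from here the whole argument lives on a matrix of constant size.

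Next I would compute symbolically the vectors $\pre_\alpha(-)$ and $\post_\alpha(-)$ for $B_k$ and $B'_k$, each coordinate a rational function of $\alpha$ whose coefficients are rational in $k$; for the score statement I keep the uniform preference vector, while for the rank statement I pass silently to the adjugate, as in Section~\ref{sec:katz}, since the common denominator keeps a fixed sign on $[0\..1)$. Writing $x$ for the more important endpoint that is hurt, $y$ for the other endpoint (equal degree, smaller score), and $z$ for a witness node whose position relative to $x$ flips, the verification proceeds in three routine steps for each relevant difference of coordinates: (i) evaluate the difference at a convenient rational point, say $\alpha=2/3$, and check the desired sign for all large $k$; (ii) form the Sturm sequence of its numerator and evaluate that sequence at $a_k=\frac23-\frac{2\sqrt k}{3\sqrt k+10}$ and $b_k=\frac23+\frac k{3k+1}$, reading off from the asymptotic signs (as $k\to\infty$) of the resulting rational functions in $k$ that there is no sign change in $\bigl(a_k\..b_k\bigr]$; (iii) observe that the denominator has no zero in $[0\..1)$, so the sign of the difference is constant on that interval, which itself exhausts $(0\..1)$ as $k\to\infty$. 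I would apply this to $\post_\alpha(x)-\pre_\alpha(x)$ (negative, giving the score violation), to $\pre_\alpha(x)-\pre_\alpha(z)$ (positive: $x$ dominates $z$ before) and to $\post_\alpha(x)-\post_\alpha(z)$ (negative: $z$ dominates $x$ after), factoring out any spurious power of $(1-\alpha)$ to restore squarefreeness before running Sturm, exactly as was done for $\post_\alpha(1)-\post_\alpha(5)$ in the proof of Theorem~\ref{th:prrank}. To certify that the violation is a \emph{top} violation I would additionally show $\pre_\alpha(x)-\pre_\alpha(y)>0$, i.e.\ that $x$ already outranks $y$ before the edge is added, and for demotion I would check on the remaining (finitely many) nodes of $B_k$ that their position relative to $x$ is unchanged, so the number of nodes above $x$ strictly increases.

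The hard part, as with the companion theorems, is not conceptual but computational: the Sturm sequences of the moderately high-degree polynomials in $\alpha$ whose coefficients are polynomials of degree $\Theta(1)$ in $k$ must be carried through Euclidean division and then analyzed asymptotically in $k$ at the interval endpoints, which is best delegated to Sage. The one genuinely new ingredient is the choice of window: whereas for Figure~\ref{fig:pr} the left endpoint was $\frac23-\frac{2k}{3k+100}$, here it is $\frac23-\frac{2\sqrt k}{3\sqrt k+10}$, so the window closes in on $(0\..1)$ only at rate $\Theta(1/\sqrt k)$ rather than $\Theta(1/k)$---a scaling that presumably reflects the slower rate at which the score gap between the two equal-degree endpoints shrinks; finding a pair $(a_k,b_k)$ wide enough to reach $(0\..1)$ in the limit yet narrow enough to keep every relevant numerator of one sign is where the care goes, after which the checks are mechanical. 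As in the companion proofs, one could also count sign changes of the differences in $(0\..a_k]$ and $(b_k\..1)$ to describe the limiting behaviour, consistent with Theorem~\ref{th:prlim} forcing monotonicity near $\alpha=1$.
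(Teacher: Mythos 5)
Your proposal follows essentially the same route as the paper, which itself proves Theorem~\ref{th:prrank-new} by declaring that ``the proof follows the same line of attack'' as Theorems~\ref{th:prscore} and~\ref{th:prrank} --- fibration to a fixed-size base, symbolic computation of $\pre_\alpha(-)$ and $\post_\alpha(-)$, sign checks at $\alpha=2/3$, and Sturm sequences on the shifted interval with left endpoint $\frac23-\frac{2\sqrt k}{3\sqrt k+10}$ --- deferring the mechanical computations to the Sage worksheets. Your added checks (top violation via $\pre_\alpha(x)-\pre_\alpha(y)>0$, factoring out powers of $(1-\alpha)$ for squarefreeness, demotion on the remaining nodes) match the paper's treatment of the companion theorems and are correct.
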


\begin{figure}
\centering
\begin{tabular}{cc}
\raisebox{1cm}{$G_k$\qquad}&\includegraphics{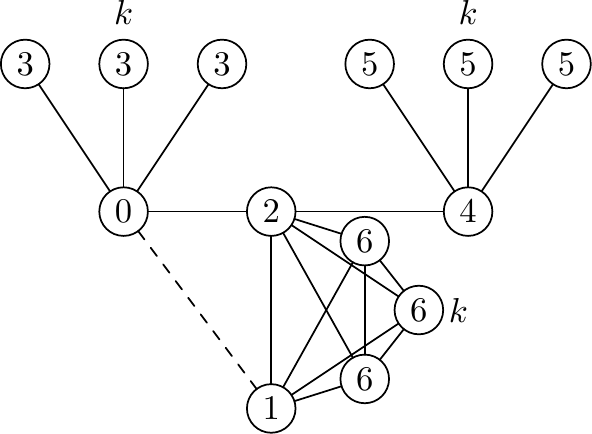}\\
\raisebox{1cm}{$B_k$\qquad}&\includegraphics{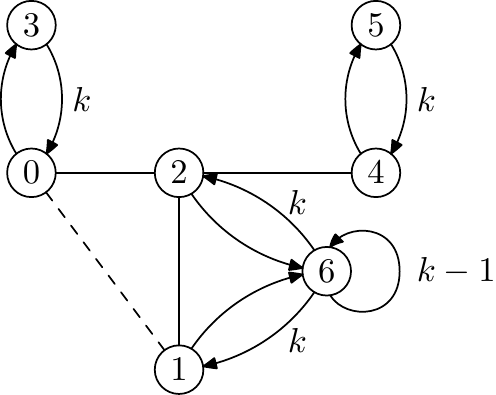}
\end{tabular}
\caption{\label{fig:pr-new}
A parametric counterexample graph for PageRank: when adding the edge $0\adj 1$,
vertex $0$ violates score and rank monotonicity (top violation).
There is a star with $k$ leaves around vertex $0$, a star with $k$ leaves around vertex $4$,
and the $k$ vertices labeled with $6$ form a $(k+2)$-clique with vertices $1$ and $2$.  
Arc labels represent multiplicity; weights are induced by the uniform distribution on the upper graph.}
\end{figure}

\subsection{Anecdotal Evidence: PageRank on the IMDB}

\begin{table}[t]
\renewcommand{\arraystretch}{1.2}
\renewcommand{\tabcolsep}{1ex}
\begin{tabular}{lll}
Score increase & Score decrease & Violations of rank monotonicity\\
\hline
Meryl Streep & Yasuhiro Tsushima & Anne--Mary Brown, Jill Corso,~\ldots\\
Denzel Washington & Corrie Glass & Patrice Fombelle, John Neiderhauser,~\ldots\\
Sharon Stone & Mary Margaret (V) & Dolores Edwards, Colette Hamilton,~\ldots\\
John Newcomb & Robert Kirkham & Brandon Matsui, Evis Trebicka,~\ldots 
\end{tabular}
\vspace*{.5em}
\caption{\label{tab:rank}A few examples of violations of score monotonicity and rank monotonicity in the Hollywood co-starship graph
\texttt{hollywood-2011}. If we add an edge between the actors in the first and second column, 
the first actor has a score increase, the second actor has a score decrease,
and the actors in the third column, which were less important than the second actor, become more important after the edge addition.
The first three examples are bottom violations, whereas the last one is a top violation.}
\end{table}

To show that our results are not only theoretical, we provide a few interesting anecdotal examples from
the PageRank scores ($\alpha=0.85$) of the Hollywood co-starship graph,
whose vertices are actors/actresses in the Internet Movie Database, with an edge connecting them if played in the same movie.
In particular, we used the \texttt{hollywood-2011} dataset from the Laboratory for Web Algorithmics,\footnote{\url{http://law.di.unimi.it/}}
which contains approximately two million vertices and $230$ million edges.

To generate our examples, we picked two actors either at random, or considering
the top $1/10000$ of the actors of the graph in PageRank order and the bottom
quartile, looking for a collaboration that would hurt either actor (or
both).\footnote{Note that for this to happen, the collaboration should be a
two-person production. A production with more people would add more
edges.} About $4$\% of our samples yielded a violation of monotonicity, and in
Table~\ref{tab:rank} we report a few funny examples.

The first three cases are bottom violations: it is the less-known actor that loses score (and rank) by the collaboration
with the star, and not the other way round, as it happens also in the counterexample of Figure~\ref{fig:pr}.
In the last case, instead, we hava top violation: 
a collaboration would damage the more important vertex, like in the counterexample of Figure~\ref{fig:pr-new}.
We found no case in which both actors would be hurt by the collaboration, and it is in fact an open problem whether this situation 
can happen.

\section{Conclusions}

We have studied score and rank monotonicity on undirected graphs for some popular notions of centrality.
Our results show that except for Seeley's index (on connected graphs) there are always cases in which rank monotonicity 
does not hold, and in the case of Katz's index and PageRank we can find range of values of the parameters where the
violation occurs;
moreover, some centralities are also not score monotone. We provide examples of both top and bottom violations
to highlight that even the knowledge of whether one is the more important or less important node is insufficient 
to decide whether the new edge will be beneficial. A possible direction for future research
is to show that top and bottom violations cannot happen at the same time, that is, that the new
edge is beneficial for at least one endpoint. 

This lack of monotonicity is opposite to that we observed in the directed case, and it can also be seen in real-world graphs (at least for PageRank).
It is interesting to note that even centrality indices that were designed for undirected graphs (e.g., closeness) are 
not rank monotone in the undirected case (even under a connectedness assumption).
Our results show that common knowledge and intuitions about the behavior of centrality measures in the directed case cannot
be applied to the undirected case.

\bibliography{biblio}

\end{document}